  \theoremstyle{definition}
  \newtheorem{defn}{\protect\definitionname}
  \theoremstyle{plain}
  \newtheorem{prop}{\protect\propositionname}
\theoremstyle{plain}
\newtheorem{thm}{\protect\theoremname}
  \theoremstyle{plain}
  \newtheorem{lem}{\protect\lemmaname}
 \theoremstyle{definition}
  \newtheorem{example}{\protect\examplename}
  \theoremstyle{plain}
  \newtheorem{cor}{\protect\corollaryname}
  \theoremstyle{remark}
  \newtheorem{rem}{\protect\remarkname}
  \providecommand{\definitionname}{Definition}
  \providecommand{\examplename}{Example}
  \providecommand{\lemmaname}{Lemma}
  \providecommand{\propositionname}{Proposition}
  \providecommand{\remarkname}{Remark}
\providecommand{\corollaryname}{Corollary}
\providecommand{\theoremname}{Theorem}
\begin{document}

\title{Downlink Performance Analysis for a Generalized Shotgun Cellular
System}

\author{{\normalsize Prasanna Madhusudhanan$^{*}$, Juan G. Restrepo$^{\dagger}$,
Youjian (Eugene) Liu$^{*}$, Timothy X Brown}$^{*+}${\normalsize ,
and Kenneth Baker}$^{+}${\normalsize \\$^{*}$Department of Electrical,
Computer and Energy Engineering,\\$^{\dagger}$Department of Applied
Mathematics, $^{+}$Interdisciplinary Telecommunications Program\\University
of Colorado, Boulder, CO 80309-0425 USA\\[-0.7em]\{mprasanna, juanga,
eugeneliu, timxb, kenneth.baker\}@colorado.edu}}
\maketitle
\begin{abstract}
In this paper, we analyze the signal-to-interference-plus-noise ratio
(SINR) performance at a mobile station (MS) in a random cellular network.
The cellular network is formed by base-stations (BSs) placed in a
one, two or three dimensional space according to a possibly non-homogeneous
Poisson point process, which is a generalization of the so-called
shotgun cellular system. We develop a sequence of equivalence relations
for the SCSs and use them to derive semi-analytical expressions for
the coverage probability at the MS when the transmissions from each
BS may be affected by random fading with %
{} arbitrary distributions as well as attenuation following arbitrary
path-loss models. For homogeneous Poisson point processes in the interference-limited
case with power-law path-loss model, we show that the SINR distribution
is the same for all fading distributions and is not a function of
the base station density. In addition, the influence of random transmission
powers, power control, multiple channel reuse groups on the downlink
performance are also discussed. The techniques developed for the analysis
of SINR have applications beyond cellular networks and can be used
in similar studies for cognitive radio networks, femtocell networks
and other heterogeneous and multi-tier networks. \end{abstract}
\begin{IEEEkeywords}
Cellular Radio, Co-channel Interference, Random Cellular Deployments,
Fading Channels, Stochastic Ordering.
\end{IEEEkeywords}

\section{Introduction\label{sec:Introduction}}

The modern cellular communication network is a complex overlay of
heterogeneous networks such as macrocells, microcells, picocells,
and femtocells. The base station (BS) deployment for these network
can be planned, unplanned, or uncoordinated. Even when planned, the
base station (BS) placement in a region typically deviates from the
ideal regular hexagonal grid due to site-acquisition difficulties,
variable traffic load, and terrain. The coexistence of heterogeneous
networks has further added to these deviations. As a result, the BS
distribution appears increasingly irregular as the BS density grows
and is outside standard performance analysis.

Two approaches of modeling have been widely adopted in the literature.
At one end, the BSs are located at the centers of regular hexagonal
cells to form an ideal hexagonal cellular system. At the other end,
the BS deployments are modeled according to a Poisson point process
which we refer to as shotgun cellular system (SCS). In \cite{Brown2000},
we make a connection between these two models on a homogeneous two
dimensional (2-D) plane. It is shown that the signal-to-interference
ratio, $\left(\mathrm{SIR}\right),$ of the SCS lower bounds that
of the ideal hexagonal cellular system and, moreover, the two models
converge in the strong fading regime. Since the BS deployment in the
practical cellular system lies somewhere in between these two extremes,
the analysis of SCSs is important to completely understand the performance
of the cellular networks. Such an analysis holds significance in areas
beyond cellular networks. Wireless LANs, cognitive radios, and ad-hoc
networks are also characterized by irregular deployment of the BSs
\cite{Chandrasekhar2009b,AndJin2008,HunAnd2007,HunAnd2008,JinAnd2007,TruWeb2009,WebAnd2004,WebAnd2005,WebAnd2006a,WebAnd2006c,Weber2004,WebYan2005,WilHam2007,Madhusudhanan2012a,Madhusudhanan2010,Madhusudhanan2011,Madhusudhanan2012}.
It is emphasized that, although the deployment of BSs in practice
is not random, such a study is useful because it allows a thorough
theoretical understanding of many important effects in the strong
fading regime.

A Poisson point process has been adopted in the literature for the
locations of nodes in the study of sensor networks, ad hoc networks
and other uncoordinated and decentralized networks. In the case of
ad-hoc networks, bounds on the transmission capacity have been derived
in several different contexts \cite{Weber2004,TruWeb2009,WebAnd2004,WebAnd2006b,WebAnd2006c,WebYan2004,WebYan2005,AndWeb2007,JinWeb2008}.
Finding the optimal bandwidth partitioning in uncoordinated wireless
networks is considered in \cite{JinAnd2008}. Similar outage probability
analysis in ad-hoc packet radio networks is considered in \cite{Takagi1984,Zorzi1994}.

An underlying assumption in all the previous work is that the density
of transmitters is constant throughout the cellular region, i.e. the
Poisson point process is homogeneous. Such a model does not appropriately
represent practical cellular networks where the BS distribution is
irregular. In this paper, this scenario is incorporated by modeling
the distribution of BSs by a non-homogeneous Poisson point process.
Moreover, the region of interest need not be restricted to $\mathbb{R}^{2}$
as in prior work, and may be $\mathbb{R}^{1}$ or $\mathbb{R}^{3}$.
Furthermore, the performance dependence on the MS location within
the non-homogeneous cellular region is also characterized. Handoff
features and other dynamics are out of the scope of this work.

Finally, most research restricts the interference analysis to popular
fading models like log-normal, Rayleigh, and Rician distributions
and a propagation model that follows the power law path-loss. Here,
the results hold for arbitrary fading distribution and arbitrary path-loss
models. This helps in more accurately modeling the wireless network.
\begin{figure}
\begin{centering}
\includegraphics[bb=25bp 412bp 596bp 668bp,clip,scale=0.75]{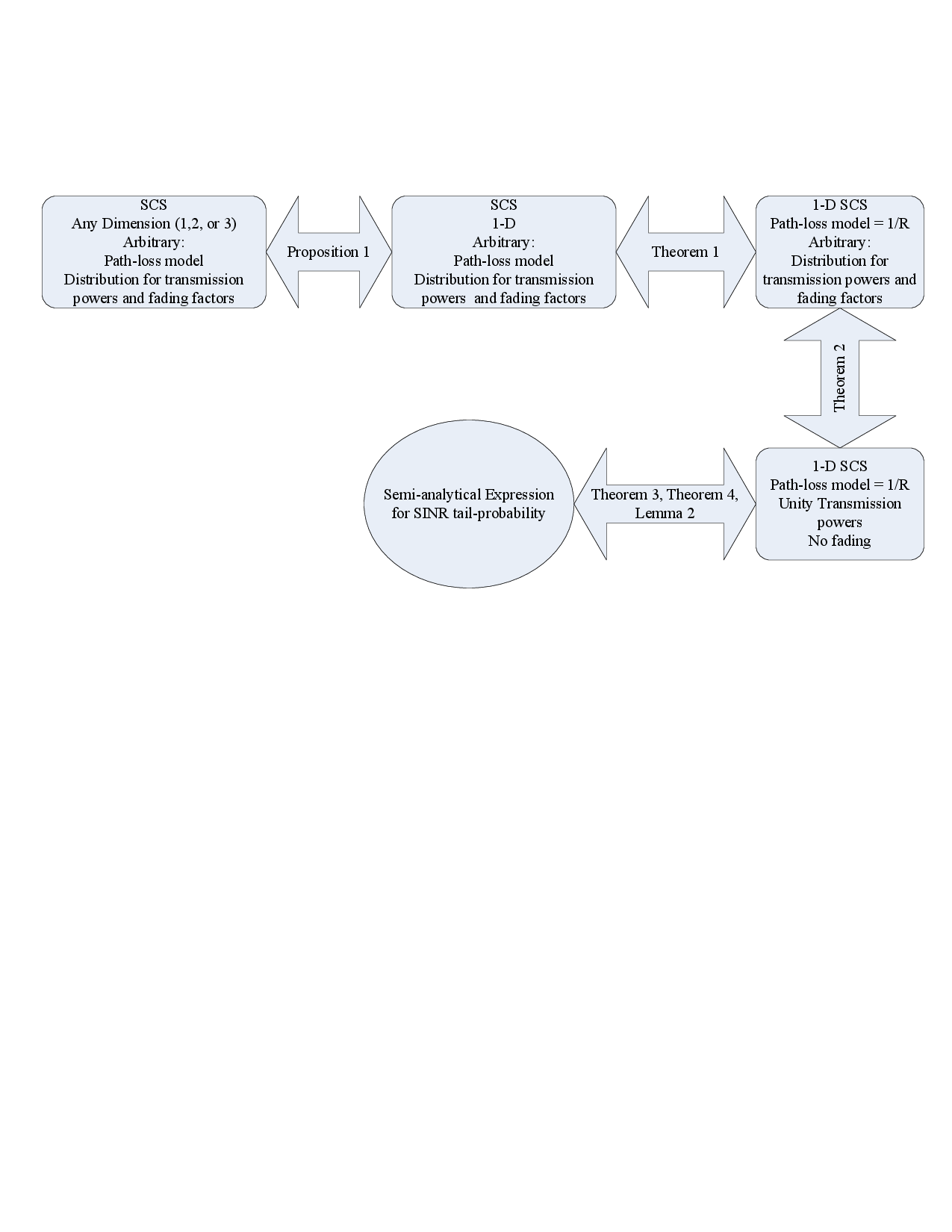}
\par\end{centering}

\caption{\label{fig:contributions1}Contributions of this paper: SINR characterization
for l-D SCSs}
\end{figure}
\begin{figure}
\begin{centering}
\includegraphics[bb=16bp 412bp 542bp 695bp,clip,scale=0.75]{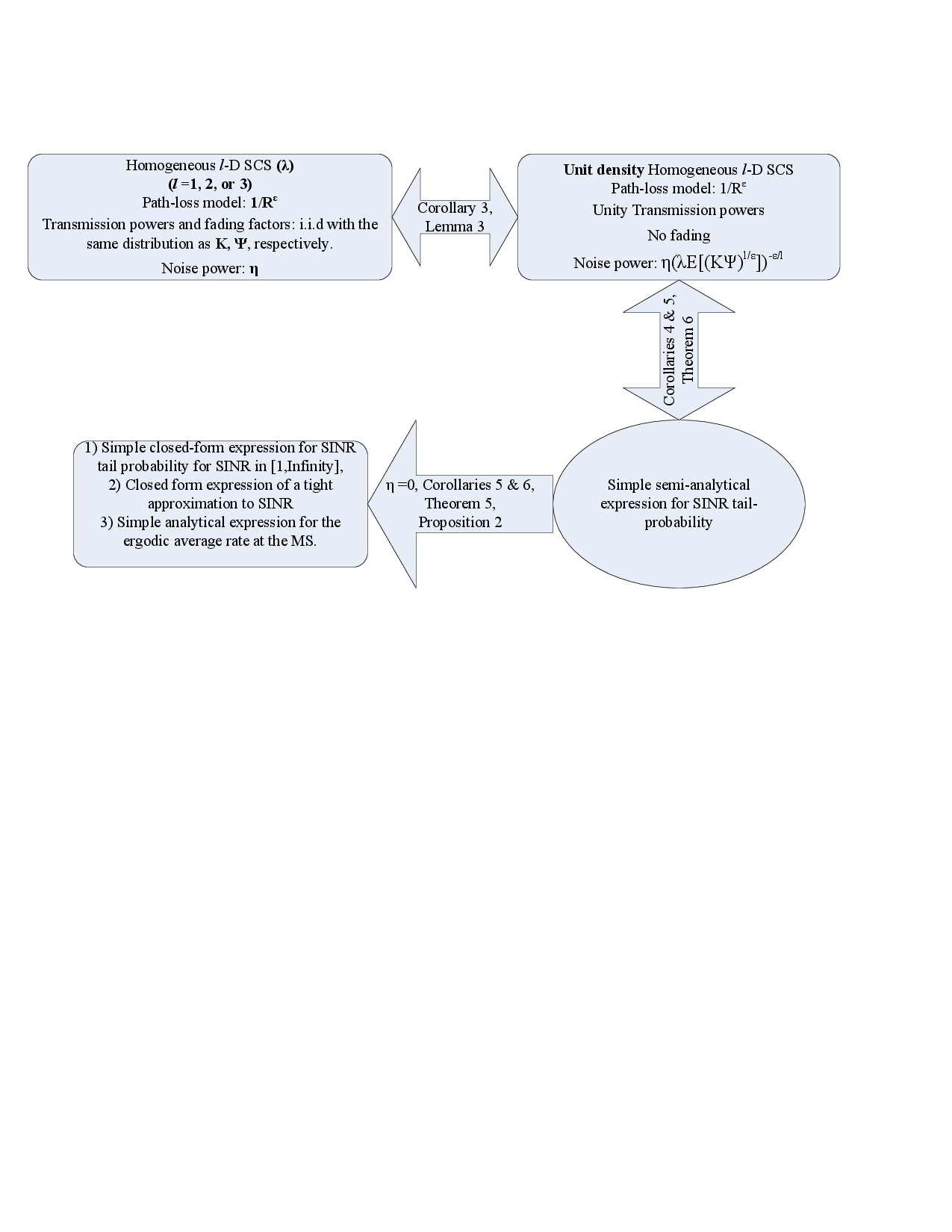}
\par\end{centering}

\caption{\label{fig:contributions2}Contributions of this paper: SINR characterization
for homogeneous $l$-D SCSs}
\end{figure}

The main contributions of this paper are depicted in Figures \ref{fig:contributions1}
and \ref{fig:contributions2}. For general system model where the
BS arrangement is according to a non-homogeneous Poisson point process
in an arbitrary dimension $\left(l=1,2,\ 3\right)$, for arbitrary
path-loss models and arbitrary distributions of the independent and
identically distributed (i.i.d.) random transmission powers and fading
factors, we successively reduce the actual system to a canonical model
that is equivalent in terms of the SINR characteristics, and characterize
the SINR distribution for the simplest equivalent system, thereby
solving the problem for the most general network. For the case of
homogeneous SCSs, which is the most widely used model for random node
locations, we obtain simple closed form characterizations of the SINR
, as well as several insight about the network. This is briefly shown
in Figure \ref{fig:contributions2}.

Applications of the above results in specific wireless communication
scenarios are briefly described in Section \ref{sec:WCommApplications}
followed by the conclusions. Next, the system model and the performance
metric of interest are briefly explained.

\section{System Model\label{sec:modelreview}}

This section describes the various elements used to model the shotgun
cellular system, namely, the BS layout, the radio environment, and
the performance metrics of interest.

\subsection{BS Layout\label{sub:bs_layout}}
\begin{defn}
The \textit{Shotgun Cellular System (SCS) }is a model for the cellular
system in which the BSs are placed in a given $l$-dimensional plane
(typically $l=1,2,\ \mathrm{and}\ 3$) according to a non-homogeneous
Poisson point process on $\mathbb{R}^{l}$ \cite{Kingman1993,Ross1983}.

The intensity function of the Poisson point process is called the
BS density function in the context of the SCS. A 1-D SCS models, for
example, the BS deployments along a highway. A 2-D SCS models planar
BS deployments, and the 3-D SCS models BS deployments in a dense urban
area, or wireless LANs in an apartment building. The 1-D, 2-D and
3-D SCSs are described using the BS density functions $d\left(x\right),$
$d\left(r,\theta\right),$ and $d\left(r,\theta,\phi\right),$ where
$-\infty\le x\le\infty$ represents a point in 1-D, and $r,\ \theta,\ \phi$
are used to represent a point in polar coordinates, in 2-D and 3-D.
\end{defn}
A $l-$D SCS is said to be homogeneous if the BS density function
is a constant over the entire $l$-D space. A homogeneous 2-D SCS
is a common model for the random node placement in many scenarios.

We consider the most general possible description for the wireless
radio environment. Let the received power at a distance $r\ \left(\ge0\right)$
from a given BS be given by
\begin{eqnarray}
P\left(r\right) & = & \left.K\Psi\right/h\left(r\right),\label{eq:receivedPower}
\end{eqnarray}
 where $K$ represents the transmission power and the antenna gain
of the BS, $\Psi$ captures the channel gain, and the function $h\left(\cdot\right)$
represents a path-loss that a signal experiences as it propagates
in the wireless environment. The most commonly used path-loss model
is the power-law path-loss model, $h\left(r\right)=r^{\varepsilon},$
where $\varepsilon$ is called the path-loss exponent.

\subsection{Performance Metric\label{sub:performance_measure}}

In this paper, we focus on the downlink performance of the SCS. In
other words, we are concerned with the signal quality at a mobile-station
(MS) within the SCS. The MS is assumed to be located at the origin
of the $l$-D SCS unless specified otherwise. The signal quality at
the MS is defined as the ratio of the received power from the serving
BS to the sum of the interference powers ($I\ \mathrm{or}\ P_{I}$),
and the background noise power $(\eta)$, and is called the signal-to-interference-plus-noise
ratio (SINR). In an \emph{interference-limited system}, $I\gg\eta$
and the signal quality is the signal-to-interference ratio (SIR).

Using $\left(\ref{eq:receivedPower}\right)$, the SINR at the MS from
a BS at a distance, say $R_{i},$ is
\begin{eqnarray}
\mathrm{SINR} & = & \frac{\left.K_{i}\Psi_{i}\right/h\left(R_{i}\right)}{\underset{j\ne i}{\sum_{j=1}^{\infty}}\left.K_{j}\Psi_{j}\right/h\left(R_{j}\right)+\eta},\label{eq:SINRexpression}
\end{eqnarray}

\noindent where $\{K_{j},\ \Psi_{j}\}_{j=1}^{\infty}$ are independent
and identically distributed (i.i.d) pairs of random variables representing
the transmission power and the channel gain coefficients, respectively,
of the $j^{\mathrm{th}}$ BS, and $\{R_{j}\}_{j=1}^{\infty}$ are
random variables that come from underlying Poisson point process that
governs the BS placement. Also, the MS associates itself with the
BS that has the strongest received signal power (referred to as the
serving BS), and can successfully communicate with this BS, only if
the corresponding SINR exceeds a certain operating threshold, denoted
by $\gamma.$ In this paper, we find the tail probability {[}i.e.
the complementary cumulative density function (c.c.d.f.){]} of the
SINR, which helps characterize an important performance metric for
wireless networks, namely, the coverage probability, i.e. the probability
that a MS is able to successfully communicate with the desired BS.
The following section presents some necessary results that helps simplify
and solve the problem.

\section{SINR Characteristics}

As illustrated in Figure \ref{fig:contributions1}, this section presents
several equivalence relations on BS density, path-loss model, transmission
power and fading that leads to an equivalent canonical SCS model.
Then the equivalence relations are used to simplify the analysis of
the SINR. The equivalence is defined below.
\begin{defn}
\label{def:SCSequivalence}Two SCSs are \textit{equivalent} if the
joint distribution of the powers from all the BSs of a SCS received
at the MS located at the origin is the same as the joint distribution
of the other SCS.
\end{defn}

As a result, if the noise powers are equal, the SINRs at the MSs in
two equivalent SCSs have the same distribution.

The following proposition gives an equivalent 1-D SCS for any $l$-D
SCS. It is a simple consequence of the fact that the path-loss models
considered in this paper is a function of only the distance between
the MS and a BS, not of the orientation.
\begin{prop}
\emph{\label{pro:1DSCSEquivalence}}\textup{An $l$-D SCS, $l=1,\ 2,\ \mathrm{and}\ 3$
is equivalent to a 1-D SCS with a one-sided BS density function $\lambda\left(r\right),\ r\ge0$,
calculated below, if other parameters are the same.}\emph{ }
\begin{itemize}
\item \emph{For a 1-D SCS with density function $d\left(x\right),\ -\infty\le x\le\infty,$
$\lambda\left(r\right)=d\left(r\right)+d\left(-r\right).$}
\item \emph{For a 2-D SCS with density function $d\left(r,\theta\right),$
$\lambda\left(r\right)=\int_{\theta=0}^{2\pi}d\left(r,\theta\right)rd\theta$. }
\item \emph{For a 3-D SCS with density function $d\left(r,\theta,\phi\right),$
$\lambda\left(r\right)=\int_{\theta=0}^{\pi}\int_{\phi=0}^{2\pi}d\left(r,\theta,\phi\right)r^{2}\sin\left(\theta\right)d\theta d\phi.$}
\end{itemize}
\end{prop}

Next, we show the equivalence between SCS's with path-loss model \emph{$\frac{1}{h\left(R\right)}$}
and SCS's with path-loss model\emph{ $\frac{1}{R}$,} using the concepts
of stochastic ordering \cite{Ross1983,ShakedShanthikumar06,ShakedShanthikumar94}.
\begin{thm}
\label{thm:arbitraryPLEquivalenceTheorem}\emph{ If other parameters
are the same, a 1-D SCS with a BS density function $\lambda\left(r\right)$
and path-loss model $\frac{1}{h\left(R\right)}$ is equivalent to
a 1-D SCS with a BS density function $\bar{\lambda}(r)=\lambda(h^{-1}(r))\times\frac{d}{dr}h^{-1}(r),$
and path-loss model $\frac{1}{R},$ where $R$ is the distance between
a BS and the MS, as long as $h\left(r\right),\ r\ge0$ is a monotonically
increasing function with a derivative $h'\left(r\right)>0,\ \forall\ r>0$
and an inverse $h^{-1}\left(r\right).$ As a result, if the noise
powers are the same, the SINRs at the MSs located at the origin in
the two SCSs have the same distribution, i.e. the SINR of (\ref{eq:SINRexpression})
satisfies
\begin{eqnarray}
\left.\mathrm{SINR}\right|_{\lambda(r)} & =_{\mathrm{st}} & \left.\frac{\left.K_{i}\Psi_{i}\right/\tilde{R}_{i}}{\underset{j\ne i}{\sum_{j=1}^{\infty}}\left.K_{j}\Psi_{j}\right/\tilde{R}_{j}+\eta}\right|_{\bar{\lambda}\left(r\right)},\label{eq:arbitraryPLequivalence}
\end{eqnarray}
where $\left\{ \tilde{R}_{j}\right\} _{j=1}^{\infty}$ is the set
of distances of BSs from the MS in the 1-D SCS with BS density function
$\bar{\lambda}\left(r\right)$ and $=_{\mathrm{st}}$ represents the
equivalence in distribution.} \end{thm}
\begin{proof}
See Appendix \ref{sub:proofArbitraryPLEquivalenceTheorem}.
\end{proof}
In the following theorem, we show the equivalence between SCS's with
random transmission power and fading and SCS's with deterministic
transmission power and fading.
\begin{thm}
\emph{\label{thm:arbitraryFadingEquivalenceTheorem}A 1-D SCS with
BS density function $\lambda\left(r\right),$ path-loss model $\frac{1}{R},$
random transmission power $K$ and random fading $\Psi$ that are
i.i.d. across all BSs, is equivalent to another 1-D SCS with a BS
density function $\bar{\lambda}\left(r\right),$ $\frac{1}{R}$ path-loss
model, unity transmission power and unity fading . The above is true
for arbitrary joint distributions of $\left(K,\Psi\right)$ as long
as $\bar{\lambda}\left(r\right)=\mathbb{E}_{K,\Psi}\left[K\Psi\lambda\left(K\Psi r\right)\right]<\infty$
holds for all $r\ge0$,}%
\emph{ where $\mathbb{E}_{K,\Psi}\left[\cdot\right]$ is the expectation
operator w.r.t. $\left(K,\Psi\right).$ The distributions of the SINRs
at the MSs located at the origin of the two SCS's are the same if
the noise powers of the MSs are equal.}\end{thm}
\begin{proof}
See Appendix \ref{sub:proofArbitraryFadingEquivalenceTheorem}.
\end{proof}
Combining Proposition \ref{pro:1DSCSEquivalence}, Theorem \ref{thm:arbitraryPLEquivalenceTheorem}
and Theorem \ref{thm:arbitraryFadingEquivalenceTheorem}, without
loss of generality, we can now restrict our attention to the SINR
characterization of the canonical SCS defined below.
\begin{defn}
A canonical SCS is a 1-D SCS with a BS density function $\lambda\left(r\right),\ r\ge0,$
unity transmission power  and unity fading factors for all BSs in
the SCS, and a path-loss model of $\frac{1}{R}$.
\end{defn}
For a canonical SCS, the BS closest to the origin is the serving
BS and the rest of the BSs contribute to the interference power. The
following is an interesting fact.
\begin{lem}
\emph{\label{cor:SIR-scaling}If the noise powers are the same, the
distributions of SINRs at the MS in canonical SCSs with BS density
function of the form $\frac{1}{a}\lambda(\frac{r}{a})$ are the same
for all $a>0$. In other words, $\left.\mathrm{SINR}\right|_{\lambda(r)}=_{\text{st}}\left.\mathrm{SINR}\right|_{\frac{1}{a}\lambda(\frac{r}{a})}.$}\end{lem}
\begin{proof}
See Appendix \ref{sub:proofSIRScaling}.
\end{proof}
As a result, the appropriate scaling of the BS density function will
not change the p.d.f. of SINR. Next, we derive expressions for the
tail probability of the SINR.
\begin{thm}
\emph{\label{thm:thmcharfn}The tail probability of SINR at the MS
in a canonical SCS, $\mathbb{P}\left(\left\{ \mathrm{SINR}_{\mathrm{canonical}}>\gamma\right\} \right)$
is given by }
\begin{eqnarray}
\mathbb{P}\left(\left\{ \mathrm{SINR}_{\mathrm{canonical}}>\gamma\right\} \right) & = & \begin{cases}
\int_{\omega=-\infty}^{\infty}\Phi_{\frac{1}{\mathrm{\mathrm{SINR}_{\mathrm{canonical}}}}}\left(\omega\right)\left(\frac{1-\exp\left(-\frac{i\omega}{\gamma}\right)}{i\omega}\right)\frac{d\omega}{2\pi}, & \gamma>0\\
1, & \gamma=0
\end{cases},\label{eq:ctoINtailProb}
\end{eqnarray}
\emph{where $\Phi_{\frac{1}{\mathrm{\mathrm{SINR}_{\mathrm{canonical}}}}}\left(\omega\right)$
is the characteristic function of $\frac{1}{\mathrm{\mathrm{SINR}_{\mathrm{canonical}}}}$
given by
\begin{eqnarray}
\Phi_{\frac{1}{\mathrm{\mathrm{SINR}_{\mathrm{canonical}}}}}\left(\omega\right) & = & \mathbb{E}_{R_{1}}\left[\exp\left(i\omega\eta R_{1}\right)\times\Phi_{P_{I}|R_{1}}\left(\left.\omega R_{1}\right|R_{1}\right)\right]\label{eq:charfnitocE2}\\
 & = & \mathbb{E}_{R_{1}}\left[\exp\left(i\omega\eta R_{1}\right)\exp\left(R_{1}\times\int_{u=1}^{\infty}\left(\exp\left(\frac{i\omega}{u}\right)-1\right)\lambda\left(uR_{1}\right)du\right)\right],\label{eq:charfunitoc}
\end{eqnarray}
where $\mathbb{E}_{R_{1}}\left[\cdot\right]$ is the expectation w.r.t.
the random variable $R_{1},$ which is the distance of the BS closest
to the origin, and with the probability density function (p.d.f.)
$f_{R_{1}}\left(r\right)=\lambda\left(r\right)\times\mathrm{e}^{-\int_{s=0}^{r}\lambda\left(s\right)ds},\ r\ge0.$ }\end{thm}
\begin{proof}
See Appendix \ref{sub:proofThmCharFun}.
\end{proof}
Now, we take a minor detour from studying the canonical SCS and consider
a 1-D SCS affected by i.i.d. random  fading factor with unity mean
exponential distribution. For this case, the following theorem gives
a simpler expression for the tail probability of SINR when $\gamma\ge1.$
\begin{thm}
\emph{\label{thm:sinrTailProbGt1}For a 1-D SCS with a BS density
function $\bar{\lambda}\left(r\right),$ $\frac{1}{R}$ path-loss
model, unity transmission power, i.i.d. unity mean exponential random
variable for fading at each BS, the tail probability of SINR for $\gamma\ge1$
is given by
\begin{eqnarray}
\mathbb{P}\left(\left\{ \mathrm{SINR}>\gamma\right\} \right) & = & \int_{r=0}^{\infty}\bar{\lambda}\left(r\right)\exp\left(-\eta\gamma r-\int_{s=0}^{\infty}\frac{\bar{\lambda}\left(s\right)ds}{1+\left(\gamma r\right)^{-1}s}\right)dr.\label{eq:sinrTailProbGt1}
\end{eqnarray}
}\end{thm}
\begin{proof}
See Appendix \ref{sub:proofsinrTailProbGt1}.
\end{proof}

The above result can be used for a canonical SCS under certain conditions.
We briefly investigate this situation for which we define $\mathcal{L}\left(f\left(x\right),s\right)\triangleq\int_{x=0}^{\infty}\mathrm{e}^{-sx}f\left(x\right)dx$
to be the unilateral Laplace transform of the function $f\left(x\right).$
\begin{lem}
\textup{\label{lem:expFadingEquivalence}A canonical SCS with BS
density function $\lambda(r)$ is equivalent to the 1-D SCS considered
in Theorem \ref{thm:sinrTailProbGt1} if there exists a continuous
BS density function $\bar{\lambda}\left(r\right)\ge0$ such that
\begin{eqnarray}
\mathcal{L}\left(\bar{\lambda}\left(x\right),\frac{1}{r}\right) & = & \int_{s=0}^{r}\lambda\left(s\right)ds,\ \forall\ r\ge0.\label{eq:expFadingEquivalence}
\end{eqnarray}
As a result, the tail probability of SINR for such canonical SCS is
equal to $\left(\ref{eq:sinrTailProbGt1}\right).$}\end{lem}
\begin{proof}
The above result is obtained as a consequence of Theorem \ref{thm:arbitraryFadingEquivalenceTheorem}
which says that the two SCSs considered above are equivalent if $\lambda\left(r\right)=\mathbb{E}_{\Psi}\left[\Psi\bar{\lambda}\left(r\Psi\right)\right],\ \forall\ r\ge0,$
where $\Psi$ is the unity mean exponential random variable representing
the fading factors in the latter SCS. By rewriting the expectation
in the above equation as an integral and simplifying, we obtain
\[
\lambda\left(r\right)=\int_{x=0}^{\infty}\frac{d}{dr}\left(\mathrm{e}^{-\frac{x}{r}}\right)\bar{\lambda}\left(x\right)dx\overset{\left(a\right)}{=}\frac{d}{dr}\left(\mathcal{L}\left(\bar{\lambda}\left(x\right),\frac{1}{r}\right)\right),
\]
where $\left(a\right)$ is obtained by exchanging the order of integration
and differentiation, which is valid since $\bar{\lambda}\left(r\right)$
is continuous. Further, the resultant integral can be written in
terms of the Laplace transform of $\bar{\lambda}\left(x\right).$
Using $\left.\mathcal{L}\left(\bar{\lambda}\left(x\right),\frac{1}{r}\right)\right|_{r=0}=0$
as the initial condition, the above differential equation can be solved
to obtain the condition for equivalence between the two SCSs to be
$\left(\ref{eq:expFadingEquivalence}\right).$
\end{proof}

The following shows examples for the existence of BS density functions
$\left(\lambda\left(r\right),\bar{\lambda}\left(r\right)\right)$
that satisfy the condition in $\left(\ref{eq:expFadingEquivalence}\right).$
\begin{example}
\label{exa:expFadingEquivalenceExample1}Polynomial - polynomial equivalence:
The pair $\left(\lambda\left(r\right),\bar{\lambda}\left(r\right)\right)=\left(\alpha_{1}r^{\delta},\alpha_{2}r^{\delta}\right)$
satisfy the condition in $\left(\ref{eq:expFadingEquivalence}\right)$
as long as $\delta+1>0,$ and $\alpha_{1}=\alpha_{2}\Gamma\left(1+\delta\right)>0,$
where $\Gamma\left(\cdot\right)$ is the Gamma function.
\end{example}

\begin{example}
\label{exa:expFadingEquivalenceExample2}Rational - exponential equivalence:
The pair $\left(\lambda\left(r\right),\bar{\lambda}\left(r\right)\right)=\left(\frac{1}{\left(1+\alpha r\right)^{2}},\mathrm{e}^{-\alpha r}\right),\ \forall\ \alpha>0$
satisfy the condition in $\left(\ref{eq:expFadingEquivalence}\right).$
\end{example}

We will see in the following section that the equivalent 1-D BS density
function for the homogeneous $l$-D SCSs are polynomial functions,
and using Example \ref{exa:expFadingEquivalenceExample1} and Theorem
\ref{thm:sinrTailProbGt1}, simple analytical expressions the tail
probability of SINR are obtained.

The results presented in this section can together accurately characterize
the SINR in any arbitrary SCS with arbitrary transmission and channel
characteristics. The semi-analytical expressions presented above might
seem unwieldy at the first glance. But it turns out that several insightful
results can be extracted from this representation for a special class
of SCSs that are practically important and popular in literature.
This special class of SCSs are the homogeneous $l$-D SCSs, $l\in\left\{ 1,2,3\right\} $,
and we dedicate the next section to studying this special class in
detail.

\section{\label{sec:homogeneousLDSCS}Homogeneous $l$-D SCS}

In this section, we focus on the analysis of the homogeneous $l$-D
SCSs with a power-law path-loss model $h\left(R\right)=R^{\varepsilon}$.
The homogeneous $l$-D SCS is the most widely used stochastic geometric
model in the literature for modeling arrangement of node locations.
Especially, its validity in the study of the small-cell networks is
extremely appealing. Moreover, this model has the advantage of being
analytically amenable for a variety of situations that are of great
importance in the modeling and analysis of any type of wireless network.
The results provide several insights about such large-scale networks
that can be applied in the design of actual networks in practice.
Next, we apply the results of the previous section to the case of
the homogeneous $l$-D SCS.
\begin{cor}
\label{cor:mappingHomogeneousSCS}\emph{{[}of Proposition \ref{pro:1DSCSEquivalence}{]}
A homogeneous $l$-D SCS with a constant BS density $\lambda_{0}$
over the entire space is equivalent to the 1-D SCS with a BS density
function $\lambda(r)=\lambda_{0}b_{l}r^{l-1},\ \forall\ r\ge0$, where
$b_{1}=2$, $b_{2}=2\pi$, $b_{3}=4\pi$.}
\end{cor}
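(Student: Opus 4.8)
The plan is to read this off directly from the translation results already established, handling $l=1$ separately (it uses Lemma~\ref{lem:densityTranslationLemma}) from $l=2,3$ (which use Lemma~\ref{lem:mappingTheorem}). In each case ``equivalent'' is meant in the sense of those lemmas: the distances of the BSs from the MS at the origin form a 1-D Poisson point process whose intensity function is the claimed $\lambda_l(r)$, so every $\frac{C}{I}$ (and, later, $\frac{C}{I+N}$) computation for the homogeneous $l$-D SCS reduces to the corresponding 1-D computation.

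For $l=1$, a homogeneous 1-D SCS is by definition one whose density on the whole line is the constant $d(x)\equiv\lambda_0$. Taking the MS at the origin and using $\lambda(r)=d(r)+d(-r)$ from~\eqref{eq:densityTraslation1} (that is, Lemma~\ref{lem:densityTranslationLemma} with $y=0$) gives $\lambda_1(r)=2\lambda_0$ for all $r\ge0$, which is $\lambda_0 b_1 r^{\,1-1}$ with $b_1=2$. For $l=2$, I would apply Lemma~\ref{lem:mappingTheorem}(a) to the constant density $\lambda(r,\theta)\equiv\lambda_0$: the integral is over the bounded interval $\theta\in[0,2\pi]$, so it exists and equals $\int_{\theta=0}^{2\pi}\lambda_0 r\,d\theta=2\pi\lambda_0 r$, giving $b_2=2\pi$. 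For $l=3$, Lemma~\ref{lem:mappingTheorem}(b) applied to $\lambda(r,\theta,\phi)\equiv\lambda_0$ yields $\lambda_3(r)=\lambda_0 r^2\int_{\theta=0}^{\pi}\int_{\phi=0}^{2\pi}\sin\theta\,d\phi\,d\theta=\lambda_0 r^2\cdot 2\pi\cdot\bigl[-\cos\theta\bigr]_{0}^{\pi}=4\pi\lambda_0 r^2$, giving $b_3=4\pi$.

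There is no real obstacle here: the only hypothesis to check is the existence of the angular integrals in Lemma~\ref{lem:mappingTheorem}, and this is automatic since the angular domains are compact and the integrand is bounded when $\lambda_0$ is constant. I would close with the observation (needed so that Theorem~\ref{thm:thmcharfn} applies in the homogeneous case) that $\lambda_l(r)=\lambda_0 b_l r^{\,l-1}$ is stable with respect to $\varepsilon$ exactly when $\varepsilon>l$: near the origin $\int_{0}^{r_1}s^{\,l-1}\,ds<\infty$ trivially, while in the tail $\int_{r_1}^{R_B} r^{\,l-1-\varepsilon}\,dr$ converges as $R_B\to\infty$ precisely when $l-1-\varepsilon<-1$, i.e.\ $\varepsilon>l$.
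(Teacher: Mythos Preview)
Your proof is correct and follows exactly the paper's approach: the $l=1$ case via \eqref{eq:densityTraslation1} (Lemma~\ref{lem:densityTranslationLemma} at $y=0$) and the $l=2,3$ cases by evaluating the angular integrals in Lemma~\ref{lem:mappingTheorem}(a),(b). The stability check at the end is a useful add-on anticipating Corollary~\ref{cor:ldscs}, though it is not part of the corollary's statement itself.
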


This is easily proved by letting $d\left(x\right),$ $d\left(r,\theta\right),$
and $d\left(r,\theta,\phi\right)$ be $\lambda_{0}$ in Proposition
\ref{pro:1DSCSEquivalence}.

For the power-law path-loss model $h\left(R\right)=R^{\varepsilon}$,
we have the following equivalent SCS using Corollary \ref{cor:mappingHomogeneousSCS}
and Theorem \ref{thm:arbitraryFadingEquivalenceTheorem}.
\begin{cor}
\emph{\label{cor:homogeneousSCSPLequivalence}{[}of Theorem \ref{thm:arbitraryFadingEquivalenceTheorem}{]}
A homogeneous $l$-D SCS with BS density $\lambda_{0}$ and path-loss
model $\frac{1}{R^{\varepsilon}}$ is equivalent to the 1-D SCS with
a BS density function $\bar{\lambda}\left(r\right)=\lambda_{0}\frac{b_{l}}{\varepsilon}r^{\frac{l}{\varepsilon}-1},\ r\ge0$
and the path-loss model $\frac{1}{R}.$}
\end{cor}
Next, we characterize the effect of random transmission powers and
fading factors, i.i.d. across BSs in the homogeneous $l$-D SCS.
\begin{cor}
\emph{\label{cor:homogeneousSCSSFEquivalence}{[}of Theorem \ref{thm:arbitraryFadingEquivalenceTheorem}{]}
A homogeneous $l$-D SCS with BS density $\lambda_{0},$ power-law
path-loss model $\left(\frac{1}{R^{\varepsilon}}\right),$ random
transmission powers and fading factors that have arbitrary joint distribution
and are i.i.d. across all the BSs is equivalent to another homogeneous
$l$-D SCS with a BS density $\bar{\lambda}=\lambda_{0}\mathbb{E}\left[\left(K\Psi\right)^{\frac{l}{\varepsilon}}\right],$
same power-law path-loss model $\left(\frac{1}{R^{\varepsilon}}\right),$
unity transmission power and unity fading factor at each BS, where
$K,\ \Psi$ have the same joint distribution as the transmission power
and fading factors of the original homogeneous $l$-D SCS and $\mathbb{E}\left[\cdot\right]$
is the expectation operator w.r.t. $K$ and $\Psi,$ as long as $\mathbb{E}\left[\left(K\Psi\right)^{\frac{l}{\varepsilon}}\right]<\infty.$ }\end{cor}
\begin{proof}
Using Corollary \ref{cor:mappingHomogeneousSCS} and Corollary \ref{cor:homogeneousSCSPLequivalence},
we obtain a 1-D SCS with BS density function $\tilde{\lambda}\left(r\right)=\lambda_{0}\frac{b_{l}}{\varepsilon}r^{\frac{l}{\varepsilon}-1},$
with a path-loss model $\frac{1}{R}.$ Now, from Theorem \ref{thm:arbitraryFadingEquivalenceTheorem},
the equivalent canonical SCS has a BS density function $\hat{\lambda}\left(r\right)=\mathbb{E}\left[\left(K\Psi\right)^{\frac{l}{\varepsilon}}\right]\times\tilde{\lambda}\left(r\right).$
As a result, this can be traced back to the scaling of the BS density
of the original homogeneous $l$-D SCS by $\mathbb{E}\left[\left(K\Psi\right)^{\frac{l}{\varepsilon}}\right].$
\end{proof}

As a result, we can restrict our attention to SINR characterization
when all the BSs of the $l$-D SCS have unity transmission power and
fading factors. Now, we give the expression for the tail probability
of SINR in a homogeneous $l$-D SCS.
\begin{cor}
\label{cor:INtoCcharfn}\emph{{[}of Theorem \ref{thm:thmcharfn}{]}
In a homogeneous $l$-D SCS with a BS density $\lambda_{0}$, unity
transmission power and fading factor at each BS, if the path-loss
exponent of the power-law path-loss model satisfies $\varepsilon>l$,
the characteristic function of the reciprocal of SINR is given by
}\textup{\emph{
\begin{eqnarray}
\Phi_{\frac{1}{\mathrm{SINR}}}\left(\omega\right) & = & \mathrm{E}_{R_{1}}\left[\mathrm{e}^{i\omega\eta R_{1}}\times\mathrm{e}^{\frac{\lambda_{0}b_{l}}{l}R_{1}^{\frac{l}{\varepsilon}}\left(1-_{1}F_{1}\left(-\frac{l}{\varepsilon};1-\frac{l}{\varepsilon};i\omega\right)\right)}\right],\label{eq:homogeneousSCSCharFunSINR}
\end{eqnarray}
}}\textup{where the p.d.f. of $R_{1}$ is }$f_{R_{1}}\left(r\right)=\lambda_{0}\frac{b_{l}}{\varepsilon}r^{\frac{l}{\varepsilon}-1}\cdot\mathrm{e}^{-\lambda_{0}\frac{b_{l}}{l}r^{\frac{l}{\varepsilon}}},\ r\ge0.$\textup{
When $\eta=0,$ the SINR is equivalently the signal-to-interference
ratio (SIR), and
\begin{eqnarray}
\Phi_{\frac{1}{\mathrm{SIR}}}\left(\omega\right) & = & \frac{1}{_{1}F_{1}\left(-\frac{l}{\varepsilon};1-\frac{l}{\varepsilon};i\omega\right)},\label{eq:homogeneousSCSCharFunSIR}
\end{eqnarray}
 where $_{1}F_{1}\left(\dots\right)$ is the confluent hypergeometric
function of the first kind \cite{Mathai2008}. The tail probability
of SINR is given by $\left(\ref{eq:ctoINtailProb}\right).$}\end{cor}
\begin{proof}
From Corollary \ref{cor:homogeneousSCSPLequivalence}, the SINR distribution
is equivalent to the canonical SCS with BS density function $\lambda\left(r\right)=\lambda_{0}\frac{b_{l}}{\varepsilon}r^{\frac{l}{\varepsilon}-1},\ r\ge0.$
Now, by solving for $\left(\ref{eq:charfunitoc}\right),$ in Theorem\emph{
}\ref{thm:thmcharfn}, we obtain $\left(\ref{eq:homogeneousSCSCharFunSINR}\right).$
Further, the expectation in $\left(\ref{eq:homogeneousSCSCharFunSINR}\right)$
reduces to $\left(\ref{eq:homogeneousSCSCharFunSIR}\right)$.
\end{proof}

Due to Corollary \ref{cor:homogeneousSCSSFEquivalence}, the homogeneous
$l$-D SCS satisfies the conditions in Theorem \ref{thm:sinrTailProbGt1}
and hence a simple expression for the tail probability of SINR for
$\gamma\ge1$ can be derived below.
\begin{cor}
\textup{\label{cor:sinrTPGt1HomogeneousSCS}{[}of }\emph{Theorem \ref{thm:sinrTailProbGt1}}\textup{{]}
For a homogeneous $l$-D SCS with BS density $\lambda_{0},$ path-loss
model $\frac{1}{R^{\varepsilon}},\ \varepsilon>l,$ with unity transmission
power and fading factor at each BS, the tail probability of SINR for
$\gamma\ge1$ is
\begin{eqnarray}
\mathbb{P}\left(\left\{ \mathrm{SINR}>\gamma\right\} \right) & = & \int_{r=0}^{\infty}\frac{\lambda_{0}b_{l}r^{l-1}}{\Gamma\left(1+\frac{l}{\varepsilon}\right)}\exp\left(-\eta\gamma r^{\varepsilon}-\frac{\lambda_{0}b_{l}r^{l}\pi\gamma^{\frac{l}{\varepsilon}}}{\varepsilon\Gamma\left(1+\frac{l}{\varepsilon}\right)\sin\left(\frac{l\pi}{\varepsilon}\right)}\right)dr,\label{eq:sinrTPGt1HomogeneousSCS}
\end{eqnarray}
and when $\eta=0,$ the tail probability of SIR is
\begin{eqnarray}
\mathbb{P}\left(\left\{ \mathrm{SIR}>\gamma\right\} \right) & = & \frac{\sin\left(\frac{l\pi}{\varepsilon}\right)\gamma^{-\frac{l}{\varepsilon}}}{\left(\frac{l\pi}{\varepsilon}\right)}=\mathrm{sinc}\left(\frac{l}{\varepsilon}\right)\gamma^{-\frac{l}{\varepsilon}}.\label{eq:sinrTPGt1HomogeneousSCSNoNoise}
\end{eqnarray}
}\end{cor}
\begin{proof}
Due to Corollary \ref{cor:homogeneousSCSSFEquivalence}, the homogeneous
$l$-D SCS is equivalent to another homogeneous $l$-D SCS with the
same path-loss model and transmission powers as the former, and with
a BS density $\frac{\lambda_{0}}{\Gamma\left(1+\frac{l}{\varepsilon}\right)}$
and i.i.d. unity mean exponential random fading factors at each BS.
Using Corollary \ref{cor:homogeneousSCSPLequivalence}, the BS density
function of the 1-D SCS with $\frac{1}{R}$ path-loss model that is
equivalent to the latter homogeneous $l$-D SCS is $\bar{\lambda}\left(r\right)=\frac{\lambda_{0}b_{l}r^{\frac{l}{\varepsilon}-1}}{\varepsilon\Gamma\left(1+\frac{l}{\varepsilon}\right)},\ r\ge0$.
An alternate approach to obtain the expression for $\bar{\lambda}\left(r\right)$
is using Lemma \ref{lem:expFadingEquivalence} and Example \ref{exa:expFadingEquivalenceExample1}.

For the 1-D SCS, Theorem \ref{thm:sinrTailProbGt1} is used to obtain
the expression for the tail probability of SINR to be $\left(\ref{eq:sinrTPGt1HomogeneousSCS}\right)$,
using the identity $\int_{s=0}^{\infty}\frac{s^{\frac{l}{\varepsilon}-1}ds}{1+\left(\gamma r\right)^{-1}s}=\frac{\pi\left(\gamma r\right)^{\frac{l}{\varepsilon}}}{\sin\left(\frac{l\pi}{\varepsilon}\right)}.$
Finally, $\left(\ref{eq:sinrTPGt1HomogeneousSCSNoNoise}\right)$ is
obtained by substituting $\eta=0$ in $\left(\ref{eq:sinrTPGt1HomogeneousSCS}\right)$
and evaluating the outer integral. This completes the proof.
\end{proof}

Using Corollaries \ref{cor:INtoCcharfn} and \ref{cor:sinrTPGt1HomogeneousSCS},
the expression for the tail probability of SINR in a homogeneous $l$-D
SCS with random transmission power and fading factor with an arbitrary
joint distribution that are i.i.d. across the BSs of the SCS can be
obtained by merely scaling the BS density $\lambda_{0}$ with an appropriate
constant that is given in Corollary \ref{cor:homogeneousSCSSFEquivalence}.

The following lemma shows another interesting property of the SINR
distribution in a homogeneous $l$-D SCS.
\begin{lem}
\label{lem:ctoiInvariance}\emph{The SINR distribution in a homogeneous
$l$-D SCS with a constant BS density $\lambda_{0},$ path-loss model
$\frac{1}{R^{\varepsilon}},$ unity transmission power and fading
factor at each BS with a background noise power $\eta$ is the same
as in a homogeneous $l$-D SCS with the same path-loss model, unity
BS density, unity transmission power and fading factor at each BS
and a background noise power }\textup{$\eta\lambda_{0}^{-\frac{\varepsilon}{l}}.$}\emph{
Equivalently,}
\begin{eqnarray}
\left.\mathrm{SINR}\right|_{\left(\lambda_{0},\varepsilon,\eta\right)} & =_{\mathrm{st}} & \left.\mathrm{SINR}\right|_{\left(1,\varepsilon,\eta\lambda_{0}^{-\frac{\varepsilon}{l}}\right)}.\label{eq:ctoINoiseHomogeneous}
\end{eqnarray}
\end{lem}
\begin{proof}
$\left.\mathrm{SINR}\right|_{\left(\lambda_{0},\varepsilon,\eta\right)}\overset{\left(a\right)}{=}\left.\frac{R_{1}^{-\varepsilon}}{\sum_{k=2}^{\infty}R_{k}^{-\varepsilon}+\eta}\right|_{\lambda_{l}\left(r\right)}\overset{\left(b\right)}{=}_{\mathrm{st}}\left.\frac{\left(\alpha R_{1}\right)^{-\varepsilon}}{\sum_{i=2}^{\infty}\left(\alpha R_{i}\right)^{-\varepsilon}+\eta\alpha^{-\varepsilon}}\right|_{\lambda_{l}\left(r\right)}\overset{\left(b\right)}{=}_{\mathrm{st}}\left.\frac{\left(R_{1}^{'}\right)^{-\varepsilon}}{\sum_{k=2}^{\infty}\left(R_{k}^{'}\right)^{-\varepsilon}+\bar{\eta}}\right|_{\frac{1}{\alpha}\lambda_{l}\left(\frac{r}{\alpha}\right)},$
where $\alpha=\lambda_{0}^{\frac{1}{l}}$; $\bar{\eta}=\eta\alpha^{-\varepsilon}$;
$\left(a\right)$ is obtained by expressing SINR in terms of the equivalent
1-D SCS with $\lambda_{l}\left(r\right)=\lambda_{0}b_{l}r^{l-1},\ r\ge0$,
and multiplying numerator and denominator with $\alpha^{-\varepsilon}$;
$\left(b\right)$ follows from Corollary \ref{cor:SIR-scaling}; and
finally, $\left(\ref{eq:ctoINoiseHomogeneous}\right)$ is obtained
by noting that the 1-D SCS with BS density function $\frac{1}{\alpha}\lambda_{l}\left(\frac{r}{\alpha}\right)$
in $\left(b\right)$ corresponds to a homogeneous $l$-D SCS with
BS density 1.
\end{proof}

\begin{figure}
\begin{centering}
\includegraphics[clip,scale=0.65]{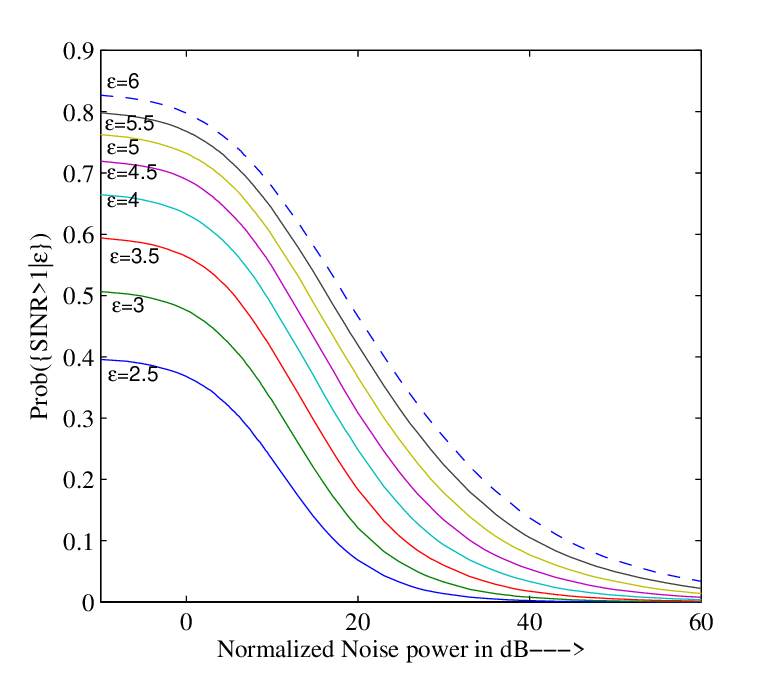}
\par\end{centering}

\caption{\label{fig:cinr_plot}Plot of $\mathrm{Prob}\left(\left\{ \mathrm{SINR}>1\right\} \right)$
vs Noise power}
\end{figure}
Therefore, it is sufficient to analyze a \textit{\emph{homogeneous}}
$l$-D SCS with BS density $\lambda_{0}=1$ and maintain a lookup
table for the tail probability of SINR for different values of the
noise powers and path-loss exponents using $\left(\ref{eq:ctoINtailProb}\right)$.
The lookup table is presented for a \textit{\emph{homogeneous}} 2-D
SCS in Figure \ref{fig:cinr_plot} as a plot of $\mathbb{P}\left(\left\{ \mathrm{SINR}>1\right\} \right)$
against noise powers for different values of path-loss exponents.
Further, in a \textit{\emph{homogeneous}} $l$-D SCS with a high BS
density $\lambda_{0}$, the equivalent noise power $\eta\lambda_{0}^{-\frac{\varepsilon}{l}}$
is small according to Lemma \ref{lem:ctoiInvariance}. Hence, in
an \textit{interference-limited system} (large $\lambda_{0}$), the
signal quality can be measured in terms of SIR. Further remarks on
SIR of a homogeneous $l$-D SCS based on Corollaries \ref{cor:INtoCcharfn}
are given below.

\begin{figure}
\begin{centering}
\includegraphics[scale=0.75]{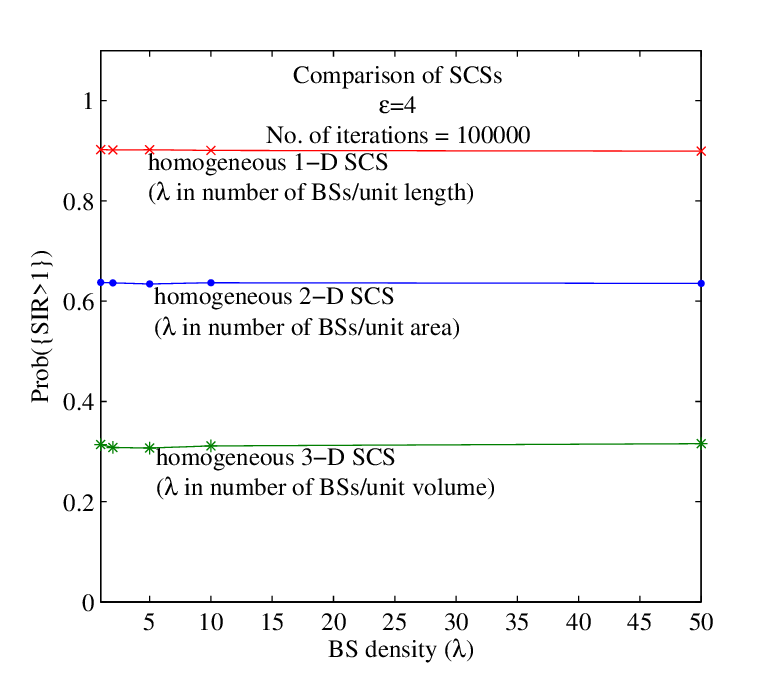}
\par\end{centering}

\caption{\label{fig:uniform_scs_comparison}Simulation showing the invariance
of the performance of homogeneous $l$-D SCS on the BS density. }
\end{figure}

\begin{rem}
\label{rem:ctoiIndependentOfLambda}The characteristic function of
the $\frac{1}{\mathrm{SIR}}$ does not depend on $\lambda_{0}$, and
hence the tail probability of $\mathrm{SIR}$ at a MS in a homogeneous
$l$-D SCS does not depend on $\lambda_{0}$.
\end{rem}

\begin{rem}
\label{rem:sinrInvariance2SFandTransmissionGains}From Corollary \ref{cor:mappingHomogeneousSCS}
and Remark \ref{rem:ctoiIndependentOfLambda}, the tail probability
of SIR is invariant to random transmission powers and fading factors
with arbitrary joint distribution and i.i.d. across the BSs.
\end{rem}

\begin{rem}
\label{rem:ctoiAndEpsilon}The expression for the characteristic function
of $\frac{1}{\mathrm{SIR}}$ for a \textit{homogeneous} 2-D and 3-D
SCS is same as that of a \textit{homogeneous} 1-D SCS with path-loss
exponents $\frac{\varepsilon}{2}$ and $\frac{\varepsilon}{3}$, respectively.
\end{rem}

Remark \ref{rem:ctoiIndependentOfLambda} shows why the tail probabilities
of SIR as a function of BS density for the \textit{homogeneous} 1-D,
2-D and 3-D SCSs in Figure \ref{fig:uniform_scs_comparison} are constant%
\footnote{The simulation results in this section are based on the methods in
Appendix \ref{sec:algo}.%
}. Remark \ref{rem:ctoiAndEpsilon} helps build an intuition of why
the \textit{homogeneous} 1-D SCS has a higher tail probability of
SIR than \textit{homogeneous} 2-D and 3-D SCSs. As the path-loss exponent
decreases, the BSs farther away from the MS have a greater contribution
to the total interference power at the MS, and this leads to a poorer
SIR at the MS and a smaller tail probability. Figure \ref{fig:uniform_scs_comparison}
shows the tail probabilities of SIR in a \textit{homogeneous} 1-D
SCS as a function of the path-loss exponent $\varepsilon$; the squares
$\left(\square\right)$ and the pluses $\left(+\right)$ represent
the values computed analytically and by Monte-Carlo simulations, respectively.
According to Remark \ref{rem:ctoiAndEpsilon}, the same figure can
be used for 2-D and 3-D systems using the scaling of $\frac{\varepsilon}{2},\ \mathrm{and}\ \frac{\varepsilon}{3}$
respectively.

In the following, we present an approximation to SIR based on modeling
the interference due to the strongest few BSs accurately and the interference
due to the rest by their ensemble average. The approximation is expected
to be tight for low BS densities. Due to Remark \ref{rem:ctoiIndependentOfLambda},
the same approximation will be tight for all BS densities. Now, we
define the so-called \textit{few BS approximation} and derive closed
form expressions for the tail probability of SIR at MS in a \textit{homogeneous}
$l$-D SCS for both the SIR regions $\left[0,1\right)$ and $\left[1,\infty\right)$.
\begin{defn}
\textit{The few BS approximation} corresponds to modeling the total
interference power at the MS in a SCS as the sum of the contributions
from the strongest few interfering BSs and an ensemble average of
the contributions of the rest of the interfering BSs.
\end{defn}

Recall that the total interference power is $P_{I}=\sum_{i=2}^{\infty}R_{i}^{-\varepsilon}$,
where $\left\{ R_{i}\right\} _{i=1}^{\infty}$ is the set of distances
of BSs arranged in the ascending order of their separation from the
MS. The arrangement also corresponds to the descending order of their
contribution to $P_{I}$, due to path-loss. In the few BS approximation,
$P_{I}$ is approximated by $\tilde{P_{I}}\left(k\right)=\sum_{i=2}^{k}R_{i}^{-\varepsilon}+\mathbb{E}\left[\left.\sum_{i=k+1}^{\infty}R_{i}^{-\varepsilon}\right|R_{k}\right],$
for some $k$, where $\mathbb{E}\left[\cdot\right]$ is the expectation
operator and corresponds to the ensemble average of the contributions
of BSs beyond $R_{k}$. The SIR at the MS obtained by the few BS approximation
is denoted by $\mathrm{SIR}_{k}.$ The expectation is calculated as
follows.
\begin{lem}
\emph{\label{cor:fewBSMeanCor}For a homogeneous $l$-D SCS, with
BS density $\lambda_{0}$ and $\varepsilon>l$, for $k=1,\ 2,\ 3,\cdots,$}
\begin{eqnarray}
\mathbb{E}\left[\left.\sum_{i=k+1}^{\infty}R_{i}^{-\varepsilon}\right|R_{k}\right] & = & \frac{\lambda_{0}b_{l}R_{k}^{l-\varepsilon}}{\varepsilon-l}.\label{eq:fewBSappMean}
\end{eqnarray}
\end{lem}
\begin{proof}
Firstly, use Corollary \ref{cor:mappingHomogeneousSCS} to reduce
the $l$-D SCS to an equivalent 1-D SCS with BS density function $\lambda(r)=\lambda_{0}b_{l}r^{l-1},\ \forall\ r\ge0$.
Next, given $k$, using the Superposition theorem of Poisson processes,
the original Poisson process is equivalent to the union of two independent
Poisson processes defined in the non-overlapping regions $\left[0,R_{k}\right]$
and $\left(R_{k},\infty\right),$ respectively, with the same BS density
function. Now, using Campbell's theorem \cite[Page 28]{Kingman1993}
to the Poisson process defined in $\left(R_{k},\infty\right),$ we
obtain $\left(\ref{eq:fewBSappMean}\right).$
\end{proof}

The following theorem gives the SIR tail probability approximation,
using $k=2$.
\begin{thm}
\emph{\label{thm:fewBSApproxTheorem}In a homogeneous $l$-D SCS with
BS density $\lambda_{0}$ and path-loss exponent $\varepsilon$, satisfying
$\varepsilon>l$, the tail probability of $\mathrm{SIR}_{2}$ at the
MS is given by
\begin{eqnarray}
\mathbb{P}\left(\left\{ \mathrm{SIR}_{2}>\gamma\right\} \right) & = & \left\{ \begin{array}{ll}
\gamma^{-\frac{l}{\varepsilon}}C_{\frac{\varepsilon}{l}}, & \gamma\geq1\\
1-e^{-u(\gamma)}(1+u(\gamma))+\gamma^{-\frac{l}{\varepsilon}}D_{\frac{\varepsilon}{l}}\left(\gamma\right), & \gamma\le1
\end{array}\right.,\label{eq:few-bs-approx}
\end{eqnarray}
where $C_{\frac{\varepsilon}{l}}=G(0)$ and $D_{\frac{\varepsilon}{l}}\left(\gamma\right)=G(u(\gamma))$
with $G\left(a\right)=\int_{v=a}^{\infty}\frac{ve^{-v}}{\left(1+v\left(\frac{\varepsilon}{l}-1\right)^{-1}\right)^{\frac{l}{\varepsilon}}}dv,$
and $u\left(\gamma\right)\equiv\left(\frac{\varepsilon}{l}-1\right)\left(\frac{1}{\gamma}-1\right)$.}\end{thm}
\begin{proof}
See Appendix \ref{sub:proofFewBSApproxTheorem}.
\end{proof}

The above approximation can be further tightened by recalling that
we already have a simple closed-form expression in $\left(\ref{eq:sinrTPGt1HomogeneousSCSNoNoise}\right)$
for the tail probability of SIR for values in the range $\left[1,\infty\right).$
Hence, the new approximation is as follows
\begin{eqnarray}
\mathbb{P}\left(\left\{ \mathrm{SIR}_{approx}>\gamma\right\} \right) & = & \begin{cases}
\mathbb{P}\left(\left\{ \mathrm{SIR}>\gamma\right\} \right) & ,\ \gamma\ge1\\
\mathbb{P}\left(\left\{ \mathrm{SIR}_{2}>\gamma\right\} \right) & ,\ \gamma\le1
\end{cases},\label{eq:SIRApproximation}
\end{eqnarray}
 where the relevant quantities are obtained from $\left(\ref{eq:sinrTPGt1HomogeneousSCSNoNoise}\right)$
and Theorem \ref{thm:fewBSApproxTheorem}.

\begin{figure}
\subfloat[\label{fig:ctoi_compare-1}]{\begin{raggedright}
\includegraphics[scale=0.65]{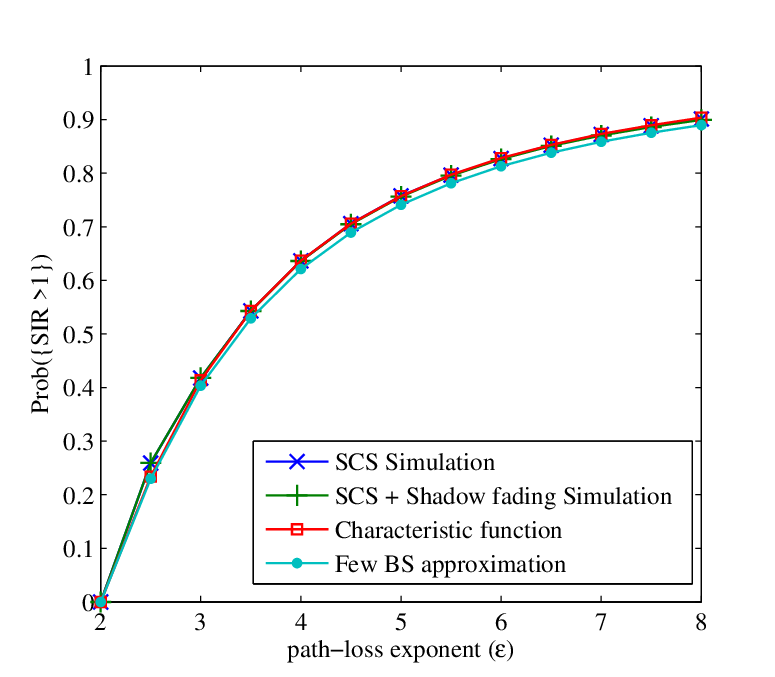}
\par\end{raggedright}

}\subfloat[\label{fig:CompareCtoiCtoi2}]{\raggedright{}\includegraphics[clip,scale=0.65]{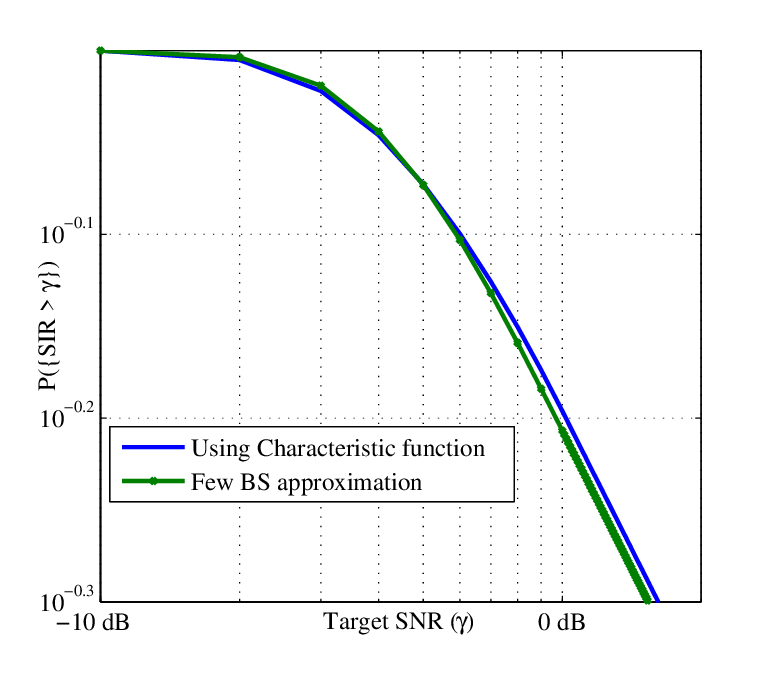}}

\caption{(a) Comparison of Simulations with the analytical results, (b) Homogeneous
2-D SCS: Comparing exact SIR and the few BS approximation for path-loss
$\varepsilon=4$.}
\end{figure}

Notice that $\mathbb{P}\left(\left\{ \mathrm{SIR}>\gamma\right\} \right)=\frac{\mathrm{sinc}\left(\frac{l}{\varepsilon}\right)}{C_{\frac{\varepsilon}{l}}}\mathbb{P}\left(\left\{ \mathrm{SIR}_{2}>\gamma\right\} \right)$
for $\gamma\ge1.$ Figure \ref{fig:ctoi_compare-1} shows that the
few BS approximation $\left(\bullet\right)$ closely follows the actual
behavior $\left(\square\right)$. Figure \ref{fig:CompareCtoiCtoi2}
shows the comparison of the tail probabilities of $\mathrm{SIR}$
(computed using Corollaries \ref{cor:INtoCcharfn} and \ref{cor:sinrTPGt1HomogeneousSCS})
and $\mathrm{SIR}_{2}$ for a \textit{homogeneous} 2-D SCS with path-loss
exponent 4. Notice that the gap between the two tail probability curves
is negligible in the region $\gamma\in\left[0,1\right]$, and further,
both the curves are straight lines parallel to each other in the region
$\gamma\in\left[1,\infty\right)$, when the tail probability is plotted
against $\gamma$, both in the logarithmic scale. This shows that
the few BS approximation characterizes the signal quality in closed
form and is a good approximation for the actual SIR.

Now, having characterized the SIR for the homogeneous $l$-D SCS,
we look closely into what happens when $\varepsilon\le l.$ We will
restrict ourselves to the case when $l=2,$ and the steps are similar
for $l=1$, and $l=3.$
\begin{thm}
\emph{\label{thm:SIRConvergesTo0}A homogeneous 2-D SCS with BS density
$\lambda,$ where the signal decays according to a power-law path-loss
function with a path-loss exponent $\varepsilon\le2,$ the SIR at
the MS is 0 with probability 1.}\end{thm}
\begin{proof}
See Appendix \ref{sub:proofSIRConvergesTo0} for the case $\varepsilon=2.$
From \cite[Corollary 5]{Madhusudhanan2012a}, $\left.\mathbb{P}\left(\left\{ \mathrm{SIR}>\gamma\right\} \right)\right|_{\varepsilon<2}\le\left.\mathbb{P}\left(\left\{ \mathrm{SIR}>\gamma\right\} \right)\right|_{\varepsilon=2}=0,\ \forall\ \gamma\ge0.$
Hence we have proved the above result.
\end{proof}

Note that once we have characterized the SINR distribution, the outage
probability at the MS is known. The event that the MS is in coverage
is given by $\left\{ \mathrm{SINR}>\gamma\right\} ,$ where $\gamma$
is the SINR threshold that the MS should satisfy to be in coverage.
Consequently, the coverage probability, $\mathbb{P}\left(\left\{ \mathrm{SINR}>\gamma\right\} \right)$
is precisely the tail probability of SINR computed at $\gamma.$ Next,
we study the average ergodic reception rate for an MS in coverage.
This quantity, termed as the coverage conditional average rate, is
given by $\mathcal{R}=\mathbb{E}\left[\left.\log\left(1+\mathrm{SINR}\right)\right|\left\{ \mathrm{SINR}>\gamma\right\} \right]$
and is the average of the instantaneous rate achievable at the MS
when the interference is considered as noise. The coverage conditional
average rate at the MS simplifies to the following expression.
\begin{eqnarray*}
\mathcal{R} & = & \log\left(1+\gamma\right)+\int_{t=\Gamma}^{\infty}\frac{\mathbb{P}\left(\left\{ \mathrm{SINR}>t\right\} \right)}{\left(1+t\right)\mathbb{P}\left(\left\{ \mathrm{SINR}>\gamma\right\} \right)}dt.
\end{eqnarray*}

As a result, based on Proposition \ref{pro:1DSCSEquivalence} and
Theorems \ref{thm:arbitraryPLEquivalenceTheorem} - \ref{thm:sinrTailProbGt1},
we can compute the coverage conditional average rate for any SCS.
Specifically, in the interference-limited case, the following proposition
provides the expression for a homogeneous $l$-D SCS and when the
popular power-law path-loss model is assumed. For this case, the SIR
characteristics are invariant to the randomness in the transmission
powers and the fading factors due to Remark \ref{rem:sinrInvariance2SFandTransmissionGains}.
Hence, without loss of generality, we restrict our attention to the
case of constant transmission powers at all BSs and no fading.
\begin{prop}
\emph{\label{pro:avgRate}The ergodic average rate at the MS in a
homogeneous 2-D SCS under the power-law path-loss model, with constant
transmission powers at all BSs and no fading is given by
\begin{eqnarray*}
 &  & \mathcal{R}=\log\left(1+\gamma\right)+\int_{x=\gamma}^{\alpha}\frac{\mathbb{P}\left(\left\{ \mathrm{SIR}>x\right\} \right)}{\mathbb{P}\left(\left\{ \mathrm{SIR}>\gamma\right\} \right)\left(1+x\right)}dx+\alpha^{-\frac{2}{\varepsilon}}\frac{\varepsilon}{2}\cdot_{2}F_{1}\left(1,\frac{2}{\varepsilon};1+\frac{2}{\varepsilon};-\alpha^{-1}\right),
\end{eqnarray*}
where $\alpha=\max\left(\gamma,1\right),$ where $_{2}F_{1}\left(1,\frac{2}{\varepsilon};1+\frac{2}{\varepsilon};-\alpha^{-1}\right)$
is the Gauss hypergeometric function and the probabilities are computed
using $\left(\ref{cor:INtoCcharfn}\right).$ Note that for $\gamma\ge1,$
the middle term drops out.}

\begin{figure}
\subfloat[\label{fig:comparingFadingDistributions}]{\begin{centering}
\includegraphics[clip,scale=0.65]{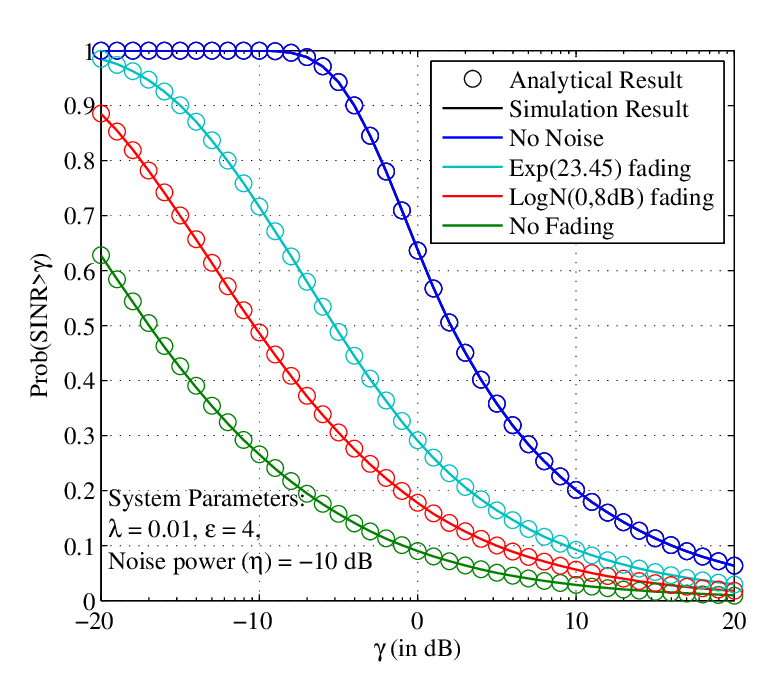}
\par\end{centering}

}\subfloat[\label{fig:comparingExactAndFewBSApprox}]{\begin{centering}
\includegraphics[scale=0.65]{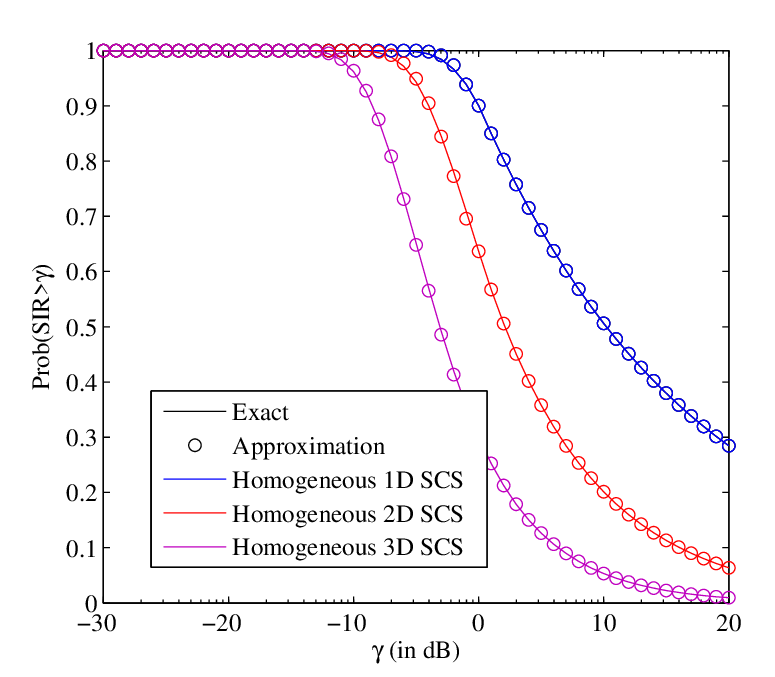}
\par\end{centering}

}

\caption{(a) Comparing the SINR distributions for various fading distributions
and noise profiles, (b) Evaluating the tightness of the few-BS approximation}

\end{figure}

\end{prop}

\section{Numerical Example and Discussion}

In the first example, we consider a homogeneous 2-D SCS with $\lambda=0.01,$
a power-law path-loss model with path-loss exponent 4, and a background
noise power of -10 dB and unity transmission powers. We compare the
SINR tail probabilities for several cases where we vary the distributions
of the fading factors as well as the background noise power. Notice
in Figure \ref{fig:comparingFadingDistributions} that in the case
when there is background noise, the distribution of the fading greatly
affects the SINR performance at the MS. In the presence of the background
noise, the MS sees a better SINR performance when the fading factors
are i.i.d. exponential random variables than when the fading factors
are log-normal random variables, when they have the same mean, and
the SINR performance is far more superior than that without fading.
This is justified by Corollary \ref{cor:homogeneousSCSSFEquivalence}
and Corollary \ref{lem:ctoiInvariance} where the equivalent homogeneous
2-D SCS with unity BS density has an equivalent background noise power
for the log-normal fading case  that is strictly greater than  that
for the exponential fading distributions. Further, in the no noise
case, the SINR performance is invariant to the fading distribution
and is the same as in the no fading case. This is also depicted in
Figure \ref{fig:comparingFadingDistributions}.

In Figure \ref{fig:comparingExactAndFewBSApprox}, we assess the few-BS
approximation for the SIR characterization in the homogeneous $l$-D
SCS. This figure shows that the SIR approximation derived in Section
\ref{sec:homogeneousLDSCS} based on the few-BS approximation (Equation
(\ref{eq:SIRApproximation})) closely follows the exact SIR characterization.
Moreover, this relationship holds for a wide range of scenarios of
interest such as for arbitrary fading and transmission power distributions,
and for all BS densities. In the following section, we discuss the
usage of the results obtained thus far in the analysis of other useful
wireless communication scenarios.

\section{Applications in wireless communications\label{sec:WCommApplications}}

We discuss several scenarios where the wireless communication systems
are modeled by the \textit{homogeneous} $l$-D SCS with BS density
$\lambda_{0}$, where $l$ = 1, 2, and 3 correspond to highway, suburban,
and dense urban deployments, respectively.

\subsubsection*{BSs with sectorized antennas}

In this example, we give a practical scenario where the transmission
powers of the BSs are i.i.d. random variables. For example, consider
the case where each BS has an ideal sectorized antenna with gain $G$
and beam-width $\theta$, such that BS's antenna faces the MS with
probability $\frac{\theta}{2\pi}$, in which case $K_{i}=G$, and
otherwise $K_{i}=0$. In this case, in the absence of fading, from
Corollary \ref{cor:homogeneousSCSSFEquivalence}, $\overline{\lambda}_{0}=\lambda_{0}G^{\frac{2}{\varepsilon}}\frac{\theta}{2\pi}$
is the BS density of the equivalent \textit{homogeneous} $l$-D SCS.

\subsubsection*{Multiple Access Techniques}

Next, we study the signal quality at the MS in a cellular system employing
different multiple access techniques. For example, in a code division
multiple access (CDMA) system, the goal is to maintain a constant
voice signal quality at the MS, which is done by power control. This
goal is achievable by having the serving BS increase its transmission
power by $\alpha=\gamma\mathrm{SIR}^{-1}$ , where $\alpha$ is the
power control factor or the processing gain, SIR is the instantaneous
signal quality at the MS, and $\gamma$ is the desired constant signal
quality. In this formulation, $\alpha$ for each BS is a random variable
and in general, the $\alpha$'s of nearby BSs are correlated. But
if the correlation is small, the SIR distribution computed here enables
radio designers to approximately model the power needs to communicate
with a MS in a SCS. In another formulation, if $\alpha$ is a constant
factor by which the power of the serving BS is improved, its effect
on the tail probability SIR at the MS is obtained by straightforward
manipulations as $\mathbb{P}\left(\left\{ \left.\alpha\times\mathrm{SIR}>\gamma\right|\varepsilon,l\right\} \right)=\mathrm{sinc}\left(\frac{l}{\varepsilon}\right)\left(\frac{\gamma}{\alpha}\right)^{-\frac{l}{\varepsilon}}\ \mbox{if }\gamma>\alpha.$

Then, consider frequency division multiple access (FDMA) and time
division multiple access (TDMA) based cellular systems. Let the available
spectrum (in frequency for FDMA and in time-slots for TDMA) be divided
into $N$ channel reuse groups (CG), and indexed as $k=1,2,\mathrm{\ \cdots,}\ N$
. Then, each BS is assigned one of the $N$ CGs, such that the $k^{\mathrm{th}}$
CG is assigned with probability $p_{k}$. In such a system, the MS
chooses a CG that corresponds to the best SIR; the BS in the CG that
corresponds to the strongest received power is the \textit{desired}
BS, and the MS chooses it as the serving BS. The SIR at the MS in
such a SCS is of interest to us. Note that this \textit{homogeneous}
$l$-D SCS is equivalent to $N$ independent \textit{homogeneous}
$l$-D SCSs with constant BS densities $\lambda_{0}p_{1},\cdots,\lambda_{0}p_{N}$,
by the properties of Poisson point processes. The tail probability
of\textbf{ }SIR at the MS in such a system is given by $\mathbb{P}\left(\left\{ \left.\mathrm{SIR}>\gamma\right|\varepsilon,N\right\} \right)=1-\left[1-\mathbb{P}\left(\left\{ \left.\mathrm{SIR}>\gamma\right|\varepsilon\right\} \right)\right]^{N}$,
where the tail probability on the right hand side is computed using
Corollary \ref{cor:INtoCcharfn}.

\subsubsection*{Cognitive Radios}

In cognitive radio technology, the cognitive radio devices (or \textit{secondary
users}) opportunistically operate in licensed frequency bands occupied
by \textit{primary users}. The interference caused by \textit{secondary
user} transmissions is harmful for \textit{primary users} operation,
and is not acceptable beyond certain limits. Studying the nature of
these interferences and formulating methods for addressing them has
been an active area of research. The results in this paper are a rich
source of mathematical tools for such studies. In \cite{Madhusudhanan2010},
we have extensively applied the results developed here to understand
the role of cooperation between the \textit{secondary users} in ensuring
that the interference caused by the \textit{secondary users} are within
the acceptable limits. The \textit{secondary users} are modeled analogous
to BS placement in \textit{homogeneous} 1-D and 2-D SCS, and the tail
probability of $\frac{C}{I}$ at the \textit{primary user} is characterized.
Further, in the context of radio environment map (REM, \cite[and references therein]{Madhusudhanan2010}),
we have highlighted the practical significance of the study of 1-D
SCS.

\subsubsection*{Overlay Networks}

The modern cellular communication network is a complex overlay of
heterogeneous networks, such as macrocells, microcells, picocells
and femtocells. This complex overlay network is seldom studied as
is, due to the analytical intractabilities. In \cite{Chandrasekhar2009b,Xia2010a},
cellular systems consisting of macrocell and femtocell networks are
analyzed. Using the results in our paper, the cumulative effect of
all the networks constituting the overlay network, on the signal quality
at the MS can be studied. A detailed study on this is set aside as
a future work, while the preliminary results are presented in \cite{Madhusudhanan2011,Madhusudhanan2012}.
Other efforts on the downlink performance characterization for heterogeneous
networks can be found in \cite{Dhillon2012,Dhillon2011b,Dhillon2011a,Dhillon2012,Jo2011,Mukherjee2012,Mukherjee2012a}.

\section{Conclusions\label{sec:conclusions}}

In this paper, we study the characterization of the SIR and SINR at
the MS in shotgun cellular systems where a SCS is defined as a cellular
system where the BS deployment in a given region is according to a
Poisson point process. A sequence of equivalent SCSs are derived to
show that it is sufficient to study the canonical SCS that has unity
transmission power and unity fading factors, and a path-loss model
of $\frac{1}{R}$. Analytical expressions for the tail probabilities
of the SIR and SINR at the MS are obtained for 1-D, 2-D and 3-D SCSs,
where the 1-D, 2-D and 3-D SCS are mathematical models for BS deployments
along the highway (1-D), in planar regions (2-D) and in urban areas
(3-D), respectively. Further, a closed form expression for the tail
probability of SIR is derived for the homogeneous cases of 1-D, 2-D
and 3-D SCS.  The results are applicable for general fading distributions
and arbitrary path-loss models. This makes the results useful for
analyzing many different wireless scenarios that are characterized
by uncoordinated deployments. The application of the results has been
demonstrated in the study of the impact of cooperation between cognitive
radios in the low power primary user detection and can be found in
\cite{Madhusudhanan2010}, and in the study of heterogeneous networks
in \cite{Madhusudhanan2011}. Future work will further explore the
applications of the SCS model in the context of indoor femtocells,
cognitive radios, and multi-tier or overlay networks.

\appendix

\subsection{\label{sub:proofArbitraryPLEquivalenceTheorem}Proof for Path-loss
Equivalence Theorem (Theorem \ref{thm:arbitraryPLEquivalenceTheorem})}

Let $\bar{R}=h(R)$ be the equivalent BS location. Using the \textit{Mapping
Theorem} in \cite{Kingman1993}, BS with locations $\bar{R}$ is also
a Poisson point process, whose density is obtained below. For any
non-homogeneous 1-D Poisson point process, $\mathbb{E}\left[N\left(r+s\right)-N\left(r\right)\right]=\int_{r}^{r+s}\lambda(z)dz$
is the expected number of occurrences in the interval $\left(r,r+s\right)$.
Thus,
\begin{eqnarray}
\mathbb{E}\left[N\left(r+s\right)-N\left(r\right)\right] & = & \mathbb{E}\left[\mbox{Number of BSs with }\bar{R}\in\left(r,r+s\right)\right]\label{eq:app1Step1}\\
 & = & \mathbb{E}\left[\mbox{Number\ of\ BSs\ with\ }R\in\left(h^{-1}\left(r\right),h^{-1}\left(r+s\right)\right)\right]\nonumber \\
 & \mathrm{=} & \int_{z=h^{-1}\left(r\right)}^{h^{-1}\left(r+s\right)}\lambda\left(z\right)dz\ =\ \int_{z=r}^{r+s}\frac{\lambda\left(h^{-1}\left(z\right)\right)}{h'\left(h^{-1}\left(z\right)\right)}dz.\nonumber
\end{eqnarray}
 Hence, the 1-D SCS with path-loss model $\frac{1}{h(R)}$ and a BS
density function $\lambda(r)$ is equivalent to the 1-D SCS with path-loss
model $\frac{1}{R}$ and BS density function $\bar{\lambda}\left(r\right)$.

\subsection{\label{sub:proofArbitraryFadingEquivalenceTheorem}Proof for Arbitrary
Fading Equivalence Theorem (Theorem \ref{thm:arbitraryFadingEquivalenceTheorem})}

Let $\bar{R}=R\left(K\Psi\right)^{-1}$, where $R$ is the random
variable representing the distance from the MS to a BS in the 1-D
SCS with a BS density function $\lambda\left(r\right)$, $K,\ \Psi$
are the transmission power and the fading factor corresponding to
the BS, respectively, and $\bar{R}$ is the corresponding equivalent
distance. Using the \textit{product space representation} and \textit{Marking
Theorem} in \cite{Kingman1993}, $\bar{R}$ also corresponds to the
1-D SCS with a BS density function derived following $\left(\ref{eq:app1Step1}\right):$
\[
\mathbb{E}\left[N\left(r+s\right)-N\left(r\right)\right]\overset{\left(a\right)}{=}\mathbb{E}_{K,\Psi}\left[\int_{rK\Psi}^{\left(r+s\right)K\Psi}\lambda\left(z\right)dz\right]\overset{\left(b\right)}{=}\int_{r}^{\left(r+s\right)}\mathbb{E}_{K,\Psi}\left[K\Psi\lambda\left(K\Psi z\right)\right]dz,
\]
 where $\mathrm{\left(a\right)}$ is obtained by rewriting the expectation
with respect to each realization of $\Psi$ and $K$, and $\mathrm{\left(b\right)}$
is obtained by exchanging the order of integration and expectation
in $\left(b\right)$ as $\mathbb{E}_{K,\Psi}\left[K\Psi\lambda\left(K\Psi z\right)\right]<\infty.$
Hence, $\bar{R}'s$ corresponds to the 1-D SCS with a BS density function
$\bar{\lambda}\left(r\right)=\mathbb{E}_{K,\Psi}\left[K\Psi\lambda\left(K\Psi r\right)\right].$

\subsection{\label{sub:proofSIRScaling}Proof for Corollary \ref{cor:SIR-scaling}}

Let $\left\{ R_{k}\right\} _{k=1}^{\infty}$ correspond to the 1-D
SCS with BS density function $\lambda(r).$ Then, since the ordered
base station locations $R_{k}$'s are determined by inter-base station
distances, it follows that $\begin{array}{c}
\left.\mathrm{SINR}\right|_{\lambda(r)}\end{array}\overset{\left(a\right)}{=}\left.\frac{(aR_{1})^{-1}}{\sum_{k=2}^{\infty}(aR_{k})^{-1}+\eta}\right|_{\lambda(r)}\overset{\left(b\right)}{=}_{\text{st}}\left.\frac{\left(R_{1}^{'}\right)^{-1}}{\sum_{k=2}^{\infty}\left(R_{k}^{'}\right)^{-1}+\eta}\right|_{\frac{1}{a}\lambda(\frac{r}{a})},$ where the SINR expression is obtained using $\left(\ref{eq:SINRexpression}\right)$
with $h\left(R\right)=R$, $\left(a\right)$ is obtained by multiplying
the numerator and denominator by $\frac{1}{a},\ a>0$ $\left(b\right)$
follows from from the properties of Poisson point processes. Further,
$\left\{ R_{k}^{'}\right\} _{k=1}^{\infty}$ in $\left(b\right)$
correspond to 1-D SCS with BS density $\frac{1}{a}\lambda\left(\frac{r}{a}\right),$
$a>0$.

\subsection{\label{sub:proofThmCharFun}Proof for the Tail Probability of SINR
(Theorem \ref{thm:thmcharfn})}

The following are the sequence of step to derive the expression in
$\left(\ref{eq:ctoINtailProb}\right).$
\begin{eqnarray*}
\mathbb{P}\left(\left\{ \mathrm{SINR}_{\mathrm{canonical}}>\gamma\right\} \right) & = & \mathbb{P}\left(\left\{ \frac{1}{\mathrm{SINR}_{\mathrm{canonical}}}<\frac{1}{\gamma}\right\} \right)\\
 & \overset{\left(a\right)}{=} & \int_{x=0}^{\frac{1}{\gamma}}\int_{\omega=-\infty}^{\infty}\Phi_{\frac{1}{\mathrm{SINR}_{\mathrm{canonical}}}}\left(\omega\right)\mathrm{e}^{-i\omega x}\frac{d\omega}{2\pi}dx,
\end{eqnarray*}
where $\left(a\right)$ is obtained by rewriting the c.d.f. of $\frac{1}{\mathrm{SINR}_{\mathrm{canonical}}}$
in terms of the characteristic function of $\frac{1}{\mathrm{SINR}_{\mathrm{canonical}}},$
where the inner integration computes the p.d.f. of $\frac{1}{\mathrm{SINR}_{\mathrm{canonical}}},$
and the outer integration gives the c.d.f. at $\frac{1}{\gamma}.$
When $\gamma=0,$ the above event occurs with probability 1, and otherwise,
it is expressed in terms of the integration in $\left(\ref{eq:ctoINtailProb}\right)$
which is obtained by exchanging the order of integrations in $\left(a\right),$
which is valid in this case, and then evaluating the integral w.r.t.
$x.$ In the rest of this section, we derive the expression for $\Phi_{\frac{1}{\mathrm{SINR}_{\mathrm{canonical}}}}\left(\omega\right),$
by first noting that $\mathrm{SINR}_{\mathrm{canonical}}=\frac{R_{1}^{-1}}{\sum_{k=2}^{\infty}R_{k}^{-1}+\eta}.$
\begin{eqnarray*}
 &  & \Phi_{\frac{1}{\mathrm{SINR}_{\mathrm{canonical}}}}\left(\omega\right)\overset{\left(a\right)}{=}\mathbb{E}_{R_{1}}\left[\Phi_{\left.\frac{1}{\mathrm{SINR}_{\mathrm{canonical}}}\right|R_{1}}\left(\left.\omega\right|R_{1}\right)\right]\overset{\left(b\right)}{=}\mathbb{E}_{R_{1}}\left[\mathrm{e}^{i\omega\eta R_{1}}\Phi_{\left.\frac{\sum_{k=2}^{\infty}R_{k}^{-1}}{R_{1}^{-1}}\right|R_{1}}\left(\left.\omega\right|R_{1}\right)\right]\\
 &  & =\mathbb{E}_{R_{1}}\left[\mathrm{e}^{i\omega\eta R_{1}}\Phi_{\left.\sum_{k=2}^{\infty}R_{k}^{-1}\right|R_{1}}\left(\left.\omega R_{1}\right|R_{1}\right)\right]\overset{\left(c\right)}{=}\mathbb{E}_{R_{1}}\left[\mathrm{e}^{i\omega\eta R_{1}}\mathbb{E}\left[\prod_{k=2}^{\infty}\left.\mathrm{e}^{i\omega R_{1}R_{k}^{-1}}\right|R_{1}\right]\right]\\
 &  & \overset{\left(d\right)}{=}\mathbb{E}_{R_{1}}\left[\mathrm{e}^{i\omega\eta R_{1}}\cdot\exp\left(-\int_{r=R_{1}}^{\infty}\left(1-\mathrm{e}^{i\omega R_{1}r^{-1}}\right)\lambda\left(r\right)dr\right)\right],
\end{eqnarray*}
where $\left(a\right)$ is obtained due to the properties of expectation,
and $R_{1}$ is the random variable for the distance of the closest
BS from the origin, $\left(b\right)$ is obtained by using the properties
of the characteristic functions and noting that in $\frac{1}{\mathrm{SINR}_{\mathrm{canonical}}}=\frac{\sum_{k=2}^{\infty}R_{k}^{-1}+\eta}{R_{1}^{-1}},$
conditioned on $R_{1}$, the term $\frac{\eta}{R_{1}^{-1}}$ is a
constant and hence separates out as $\mathrm{e}^{i\omega\eta R_{1}}$
from the original conditional characteristic function expression in
$\left(a\right),$ $\left(c\right)$ is obtained by rewriting the
exponential of summation in the characteristic function term in $\left(b\right)$
as a product of exponentials, $\left(d\right)$ is obtained by first
noting that conditioned on $R_{1},$ the events in the two disjoint
regions $\left[0,\ R_{1}\right]$ and $\left(R_{1},\ \infty\right)$
are independent of each other, and hence by the Restriction theorem
\cite[Page 17]{Kingman1993}, all the points beyond $R_{1},$ represented
by the set $\left\{ R_{k}\right\} _{k=1}^{\infty}$ can be regarded
to be associated with a Poisson point process in 1-D restricted to
the region $\left(R_{1},\infty\right),$ and with a density function
$\lambda\left(r\right).$ As a result, now we can apply Campbell's
theorem \cite[Page 28]{Kingman1993} to the inner expectation in $\left(c\right)$
to obtain $\left(d\right),$ which is further simplified to obtain
$\left(\ref{eq:charfunitoc}\right).$

\subsection{\label{sub:proofsinrTailProbGt1}Proof for Theorem \ref{thm:sinrTailProbGt1} }

Here, we derive the expression for the tail probability of SINR for
values greater than or equal to 1. Due to \cite[Lemma 1]{Dhillon2012},
there exists a unique BS within the 1-D SCS such that $\gamma\ge1$
holds true. Let the index of this unique BS be $i.$ The expression
for the tail probability of SINR is derived as follows.
\begin{eqnarray*}
\mathbb{P}\left(\left\{ \mathrm{SINR}>\gamma\right\} \right) & \overset{\left(a\right)}{=} & \mathbb{P}\left(\left\{ \frac{\Psi_{i}R_{i}^{-1}}{\sum_{j=1,\ j\ne i}^{\infty}\Psi_{j}R_{j}^{-1}+\eta}>\gamma\right\} \right)\\
 & \overset{\left(b\right)}{=} & \mathbb{E}\left[\exp\left(-\eta\gamma R_{i}\right)\prod_{j=1,\ j\ne i}^{\infty}\exp\left(-\gamma R_{i}\Psi_{j}R_{j}^{-1}\right)\right]\\
 & \overset{\left(c\right)}{=} & \mathbb{E}\left[\exp\left(-\eta\gamma R_{i}\right)\exp\left(-\int_{r=0}^{\infty}\left(1-\mathbb{E}_{\Psi}\left[\mathrm{e}^{-\gamma R_{i}\Psi r^{-1}}\right]\right)\bar{\lambda}\left(r\right)dr\right)\right]\\
 & \overset{\left(d\right)}{=} & \mathbb{E}\left[\exp\left(-\eta\gamma R_{i}\right)\exp\left(-\int_{r=0}^{\infty}\left(1-\frac{1}{1+\gamma R_{i}r^{-1}}\right)\bar{\lambda}\left(r\right)dr\right)\right],
\end{eqnarray*}
where $\left(a\right)$ is the expression for the tail probability
of SINR of the 1-D SCS with BS density $\bar{\lambda}\left(r\right)$
for which $\left\{ R_{j}\right\} _{j=1}^{\infty}$ is the set of distances
of BSs from the MS and `$i$' is the index of the unique BS that can
satisfy the constraint $\left\{ \mathrm{SINR}>\gamma\right\} ,$ $\left(b\right)$
is obtained by evaluating the expectation w.r.t. $\Psi_{i}$ and the
expectation operator $\mathbb{E}$ is w.r.t. to all other random variables
in $\left(a\right)$, $\left(c\right)$ is obtained by first conditioning
w.r.t. $R_{i}$ and noting that the Palm distribution of the BSs represented
by $\left\{ R_{j}\right\} _{j=1,\ j\ne i}^{\infty}$ given a BS at
$R_{i}$ is still a Poisson point process with density function $\bar{\lambda}\left(r\right),$
then applying the Marking theorem \cite[Page 55]{Kingman1993} and
Campbell's theorem \cite[Page 28]{Kingman1993} where $\Psi$ is the
unity mean exponential random variable, $\left(d\right)$ is obtained
by evaluating the expectation in $\left(c\right),$ and finally $\left(\ref{eq:sinrTailProbGt1}\right)$
is obtained by simplifying $\left(d\right).$

\subsection{\label{sub:proofFewBSApproxTheorem}Proof for the Few-BS Approximation
Theorem (Theorem \ref{thm:fewBSApproxTheorem})}

First, using Corollary \ref{cor:fewBSMeanCor}, $\mathrm{SIR}_{2}=\frac{KR_{1}^{-\varepsilon}}{\tilde{P_{I}}\left(2\right)}$,
with $\tilde{P_{I}}\left(2\right)=KR_{2}^{-\varepsilon}\left(1+\frac{\lambda_{0}b_{l}}{\varepsilon-l}R_{2}^{l}\right)$.
Next, notice that the event $\left\{ \mathrm{SIR}_{2}>\gamma\right\} $
is equivalent to the joint event $\left\{ R_{1}\le R_{2},\ R_{1}<\left(\frac{\gamma\tilde{P}_{I}\left(2\right)}{K}\right)^{-\frac{1}{\varepsilon}}\right\} $
and thus, $\mathbb{P}\left(\left\{ \frac{C}{I_{2}}>\gamma\right\} \right)=\mathbb{P}\left(\left\{ R_{1}\le\mathrm{min}\left(R_{2},\left(\frac{\gamma\tilde{P}_{I}\left(2\right)}{K}\right)^{-\frac{1}{\varepsilon}}\right)\right\} \right)$,
where
\begin{eqnarray*}
\mathrm{min}\left(R_{2},\left(\frac{\gamma\tilde{P}_{I}\left(2\right)}{K}\right)^{-\frac{1}{\varepsilon}}\right) & = & \begin{cases}
\left(\frac{\gamma\tilde{P}_{I}\left(2\right)}{K}\right)^{-\frac{1}{\varepsilon}}, & \gamma\ge1\\
\left(\frac{\gamma\tilde{P}_{I}\left(2\right)}{K}\right)^{-\frac{1}{\varepsilon}}, & \gamma<1,\ R_{2}>\left(\frac{l\times u\left(\gamma\right)}{\lambda_{0}b_{l}}\right)^{\frac{1}{l}}\\
R_{2} & ,\gamma<1,\ R_{2}\le\left(\frac{l\times u\left(\gamma\right)}{\lambda_{0}b_{l}}\right)^{\frac{1}{l}}
\end{cases}.
\end{eqnarray*}
 Finally, $\left(\ref{eq:few-bs-approx}\right)$ is obtained using
the joint p.d.f., $f_{R_{1},R_{2}}\left(r_{1},r_{2}\right)=\left(\lambda_{0}b_{l}\right)^{2}\left(r_{1}r_{2}\right)^{l-1}\exp\left(-\frac{\lambda_{0}b_{l}}{l}r_{2}^{l}\right),$
$0\le r_{1}\le r_{2}\le\infty,$ due to the properties of Poisson
point processes.

\subsection{\label{sub:proofSIRConvergesTo0}Proof for Theorem \ref{thm:SIRConvergesTo0}}

Let us consider the probability of the event that the interference
due to all the BSs beyond the signal BS at a given distance $R_{1}$
is below a certain value, say, $\delta,$ for the case $\varepsilon=2.$
\begin{eqnarray*}
 &  & \mathbb{P}\left(\left\{ \left.\sum_{k=2}^{\infty}R_{k}^{-2}\le\delta\right|R_{1}\right\} \right)=\mathbb{P}\left(\left\{ \left.\mathrm{e}^{-s\sum_{k=2}^{\infty}R_{k}^{-2}}\ge\mathrm{e}^{-s\delta}\right|R_{1}\right\} \right)\\
 &  & \overset{\left(a\right)}{\le}\mathrm{e}^{s\delta}\mathbb{E}\left[\left.\mathrm{e}^{-s\sum_{k=2}^{\infty}R_{k}^{-2}}\right|R_{1}\right]\overset{\left(b\right)}{=}\mathrm{e}^{s\delta}\mathrm{e}^{-\lambda\int_{r=R_{1}}^{\infty}\left(1-\mathrm{e}^{-sr^{-2}}\right)2\pi rdr}\\
 &  & \overset{\left(c\right)}{=}\mathrm{e}^{s\delta}\mathrm{e}^{\lambda\int_{r=R_{1}}^{\infty}\sum_{k=1}^{\infty}\frac{\left(-sr^{-2}\right)^{k}}{k!}2\pi rdr}\\
 &  & =\mathrm{e}^{s\delta}\mathrm{e}^{-\lambda s2\pi\cdot\left.\log\left(r\right)\right|_{r=R_{1}}^{\infty}+\lambda2\pi\sum_{k=2}^{\infty}\frac{\left(-s\right)^{k}}{k!}\frac{\left(R_{1}^{2-k\varepsilon}\right)}{k\varepsilon-2}}\\
 &  & =\mathrm{e}^{s\delta}\times0\times\mathrm{e}^{\alpha\left(R_{1}\right)}=0,
\end{eqnarray*}
where $\left(a\right)$ is obtained by applying Markov's inequality,
$\left(b\right)$ is obtained by applying Campbell's theorem to the
homogeneous Poisson point process defined in the 2-D plane beyond
$R_{1}$ from the origin, $\left(c\right)$ is obtained after the
Taylor's series expansion of the exponential function in $\left(b\right),$
and finally the result is obtained by noting that the exponential
of a sum of functions is a product of exponential and by showing that
one of the terms in the product is 0 while the others evaluate to
a finite number.

As a result,
\begin{eqnarray*}
\mathbb{P}\left(\left\{ SIR>\gamma\right\} \right) & = & \mathbb{E}_{R_{1}}\left[\mathbb{P}\left(\left\{ \left.\sum_{k=2}^{\infty}R_{k}^{-\varepsilon}<\left(\gamma R_{1}^{\varepsilon}\right)^{-1}\right|R_{1}\right\} \right)\right]\\
 & = & 0,\ \forall\ \gamma\ge0.
\end{eqnarray*}
 and hence we have proved the result.

\subsection{Simulation Methods\label{sec:algo}}

In this section, the details of simulating the SCS are presented.
A single trial in simulating the BS placement for the 1-D SCS with
BS density function $\lambda(r)$ in the region of interest which
is a subset of the 1-D plane denoted by $S$, involves the following
steps:

1) Generate a random number $M$, according to a Poisson distribution
with mean $\int_{S}\lambda\left(s\right)ds,$ which is the number
of BSs to be placed in $S$ for the given trial.

2)\textit{ BS placement:} For \textit{homogeneous} 1-D SCS, generate
$M$ random numbers according to a uniform distribution in the range
of $S$. If $\lambda(s)$ does not correspond to a \textit{homogeneous}
1-D SCS, if $\lambda_{\mathrm{max}}=\underset{s\in S}{\sup}\lambda(s)$,
then general a random number $y$ which is uniformly distributed in
the range $\left[0,\lambda_{\max}\right]$ and another random number
$x$ according to a uniform distribution in the range of $S$. A BS
is placed uniformly at $x$, only if $y<\lambda_{0}(x)$. This process
is repeated until $M$ BS are placed.

3) Compute the received power at the MS for each BS using the path-loss
exponent $\varepsilon$. The fading in the SCS is incorporated by
multiplying each of the received powers with i.i.d. random number
generated according to the distribution of the fading factor. Finally,
SINR at the MS corresponding to this trial, is computed according
to (\ref{eq:SINRexpression}).

For all the simulations in this paper $T=100,000$ trials are used
unless specified otherwise.\bibliographystyle{IEEEtran}
\bibliography{ShotgunCellularSystemAfterReview4AcceptedChanges}

\end{document}